\newcommand{\anum}[1]{\AgdaNumber{#1}}
\newcommand{\symb}[1]{\AgdaSymbol{#1}}
\newcommand{\data}[1]{{\AgdaDatatype{#1}}}
\newcommand{\module}[1]{{\AgdaModule{#1}}}
\newcommand{\field}[1]{{\AgdaField{\ensuremath{\mathsf{#1}}}}}
\newcommand{\proj}[1]{\;.\field{#1}}
\newcommand{\func}[1]{{\AgdaFunction{#1}}}
\newcommand{\primty}[1]{{\AgdaPrimitiveType{#1}}}
\newcommand{\iz}[0]{\con{i0}}
\newcommand{\io}[0]{\con{i1}}
\newcommand{\var}[1]{{\AgdaBound{#1}}}
\newcommand{\con}[1]{{\AgdaInductiveConstructor{\ensuremath{\mathsf{#1}}}}}
\newcommand{\Type}{\func{Type}}
\newcommand{\ptrunc}[1]{\func{∥}\,{#1}\,\func{∥}}
\newcommand{\tyProduct}[2]{{#1}\,\AgdaOperator{\AgdaFunction{×}}\,{#2}}
\newcommand{\tySigma}[3]{\func{Σ[}\,{#1}\,\func{∈}\,{#2}\,\func{]}\,{#3}}
\newcommand{\tyMaybe}[1]{\data{Maybe}\,{#1}}
\newcommand{\tyNat}{\data{ℕ}}
\newcommand{\tyPath}[2]{{#1}\,\func{≡}\,{#2}}
\newcommand{\tyPathP}[4]{\primty{PathP}\,(\symb{λ}\,{#1} \to {#2})\,{#3}\,{#4}}
\newcommand{\tyEquiv}[2]{{#1}\,\func{≃}\,{#2}}
\newcommand{\tyStructEq}[3]{{#1}\,\func{≃[}\,{#2}\,\func{]}\,{#3}}
\newcommand{\tmPair}[2]{{#1}\,\con{,}\,{#2}}
\newcommand{\tmTrunc}[1]{\con{[}\,{#1}\,\con{]}}
\newcommand{\tmNil}{\con{[]}}
\newenvironment{displaymathcode}
  {\par\vspace{\abovedisplayskip}
   \noindent\hspace{\mathindent}\(}
  {\)\par\vspace{\belowdisplayskip}}
\newcommand{\Agda}{{\tt Agda}}
\newcommand{\Coq}{{\tt Coq}}
\newcommand{\Lean}{{\tt Lean}}
\newcommand{\Idris}{{\tt Idris}}
\newcommand{\CubicalAgda}{{\tt Cubical} {\tt Agda}}
\definecolor{Revolutionary}{RGB}{232,70,68}
\newcommand{\substnop}[2]{{#2}\, / \,{#1}}
\definecolor{dkblue}{rgb}{0,0.1,0.5}
\definecolor{lightblue}{rgb}{0,0.5,0.5}
\definecolor{dkgreen}{rgb}{0,0.6,0}
\definecolor{dkbrown}{rgb}{0.4,0,0}
\definecolor{dkviolet}{rgb}{0.6,0,0.8}
\begin{document}

\title{Internalizing Representation Independence with Univalence}




\author{Carlo Angiuli}
\orcid{0000-0002-9590-3303}
\affiliation{
  \department{Computer Science Department}
  \institution{Carnegie Mellon University}
  \streetaddress{5000 Forbes Avenue}
  \city{Pittsburgh}
  \state{PA}
  \postcode{15213}
  \country{USA}
}
\email{cangiuli@cs.cmu.edu}

\author{Evan Cavallo}
\orcid{0000-0001-8174-7496}
\affiliation{
  \department{Computer Science Department}
  \institution{Carnegie Mellon University}
  \streetaddress{5000 Forbes Avenue}
  \city{Pittsburgh}
  \state{PA}
  \postcode{15213}
  \country{USA}
}
\email{ecavallo@cs.cmu.edu}

\author{Anders M\"ortberg}
\affiliation{
  \department{Department of Mathematics}
  \institution{Stockholm University}
  \streetaddress{Kr\"aftriket Hus 6}
  \city{Stockholm}
  \postcode{10691}
  \country{Sweden}
}
\email{anders.mortberg@math.su.se}

\author{Max Zeuner}
\affiliation{
  \department{Department of Mathematics}
  \institution{Stockholm University}
  \streetaddress{Kr\"aftriket Hus 6}
  \city{Stockholm}
  \postcode{10691}
  \country{Sweden}
}
\email{zeuner@math.su.se}

\begin{abstract}
  In their usual form, representation independence metatheorems
  provide an external guarantee that two implementations of an
  abstract interface are interchangeable when they are related by an
  operation-preserving correspondence. If our programming language is
  dependently-typed, however, we would like to appeal to such
  invariance results within the language itself, in order to obtain
  correctness theorems for complex implementations by transferring
  them from simpler, related implementations. Recent work in proof
  assistants has shown that Voevodsky's univalence principle allows
  transferring theorems between isomorphic types, but many instances
  of representation independence in programming involve non-isomorphic
  representations.

  In this paper, we develop techniques for establishing internal relational
  representation independence results in dependent type theory, by using higher
  inductive types to simultaneously quotient two related implementation types by
  a heterogeneous correspondence between them. The correspondence becomes an
  isomorphism between the quotiented types, thereby allowing us to obtain an
  equality of implementations by univalence. We illustrate our techniques by
  considering applications to matrices, queues, and finite multisets. Our
  results are all formalized in \CubicalAgda{}, a recent extension of \Agda{}
  which supports univalence and higher inductive types in a
  computationally well-behaved way.
\end{abstract}

\begin{CCSXML}
<ccs2012>
   <concept>
       <concept_id>10003752.10003790.10011740</concept_id>
       <concept_desc>Theory of computation~Type theory</concept_desc>
       <concept_significance>500</concept_significance>
       </concept>
 </ccs2012>
\end{CCSXML}

\ccsdesc[500]{Theory of computation~Type theory}

\keywords{Representation Independence, Univalence, Higher Inductive Types, Proof
Assistants, Cubical Type Theory}

\maketitle


\section{Introduction}
\label{sec:intro}

There are countless representation independence results in the theory of
programming languages, but common among them is the principle that two
implementations of an abstract interface are contextually equivalent provided
one can exhibit an operation-preserving correspondence between them
\citep{Mitchell86}. Early representation independence theorems were developed in
the context of System F, as ``free theorems'' obtained through parametricity
metatheorems \citep{Wadler89,Reynolds83}; researchers have since scaled these
results to realistic module systems such as those found in \texttt{OCaml} \citep{Leroy95}
and \texttt{Standard ML} \citep{Crary17}.

We consider the problem of representation independence for a programming
language with full-spectrum dependent types. In this setting,
\emph{metatheoretic} invariance results leave much to be desired: ideally, we
would like to appeal to these results \emph{within} the language itself to
obtain correctness theorems at deep discounts, by establishing a correspondence
between two implementations and transferring correctness theorems from the
simpler one to the more complex one.

\paragraph{Proof reuse and transfer}

This problem is closely related to one studied extensively in the context of the
\Coq{} proof assistant \citep{Coq}, namely, the ability to transfer proofs
between representations of mathematical structures
\citep{Magaud03,MagaudBertot02,CohenDenesMortberg13,TabareauTanterSozeau18}. A
simple motivating example is that of the unary natural numbers, which are
well-suited to proofs for their natural induction principle, but too
computationally inefficient to dispatch large arithmetic subgoals
\citep{ChyzakMahboubi+14}; binary numbers and machine integers are much more
efficient, but are difficult to reason about.

In the \Coq{} Effective Algebra Library (\texttt{CoqEAL}), \citet{CohenDenesMortberg13}
developed a framework for rewriting proof goals from \emph{proof-oriented} data
structures and algorithms to \emph{computation-oriented} structures and
algorithms. In their framework, users parameterize their development over the
relevant types, prove a refinement relation between the two instantiations, and
then apply a free theorem obtained via a syntactic translation of the Calculus
of Inductive Constructions (CIC) into itself. The final step uses the fact that the
CIC is sufficiently expressive to define its own parametricity model, albeit
externally \citep{KellerLasson12,BernardyJanssonPaterson12,AnandMorrisett17}.
However, \texttt{CoqEAL}'s transfer framework does not support dependently-typed goals.

Recently, \citet{TabareauTanterSozeau18} combined techniques from \texttt{CoqEAL} with
Voevodsky's univalence axiom, which asserts that isomorphic types are equal (and
thence, interchangeable) \citep{Voevodsky10cmu}. Their \emph{univalent
parametricity} approach transfers theorems, including dependently-typed ones,
between isomorphic types without appealing to a meta-level translation. However,
their framework is limited to isomorphisms, which does not cover all cases of
mathematical interest: for example, \texttt{CoqEAL} handles sparse representations of
matrices, which are often not unique, and are therefore not isomorphic to
standard representations.

\paragraph{Isomorphisms are not enough}

Restricting our attention to isomorphisms is particularly untenable in the
context of programming. Consider a standard \func{Queue}\;\var{A} interface,
consisting of a type \field{Q}, an \field{empty} queue, and functions
$\field{enqueue} : A \to \field{Q} \to \field{Q}$ and $\field{dequeue} :
\field{Q} \to \func{Maybe}\,(\field{Q}\,\func{×}\,A)$. If we intend to prove our
code correct, we may reach for a naive implementation
$\field{Q}\;\AgdaSymbol{=}\;\func{List}\;\var{A}$, which \field{enqueue}s on the
front and \field{dequeue}s from the back (writing \func{last} for the function
sending \con{[]} to \con{nothing} and
$\var{xs}\;\func{++}\;\con{[}\;\var{x}\;\con{]}$ to
$\con{just}\;(xs\;\con{,}\;x)$):
\ExecuteMetaData[agda/latex/Intro.tex]{ListQueue}

If we intend to run our code on serious workloads, however, we may reach instead
for batched queues
($\field{Q}\;\AgdaSymbol{=}\;\func{List}\;\var{A}\;\func{×}\;\func{List}\;\var{A}$),
which \field{enqueue} on the front of the first list and \field{dequeue} from
the front of the second list \citep[Section 5.2]{Okasaki99}:
\ExecuteMetaData[agda/latex/Intro.tex]{BatchedQueue}

Whereas $\func{ListQueue}\;A\proj{dequeue}$ takes linear time,
$\func{BatchedQueue}\;A\proj{dequeue}$ is amortized constant-time (being
constant-time except when \var{ys} is empty, in which case one must
\func{reverse}\;\var{xs}). To understand the batched representation, we can
observe that each \func{BatchedQueue} has the same extensional behavior as
exactly one \func{ListQueue}, computed as follows:
\ExecuteMetaData[agda/latex/Intro.tex]{appendReverse}

This correspondence is \emph{structure-preserving}---it preserves \field{empty}
and commutes with \field{enqueue} and \field{dequeue}---and therefore, by
representation independence, \func{ListQueue}s and \func{BatchedQueue}s are \\
contextually equivalent. We are tantalizingly close to obtaining the same result
internally by univa- \\ lence, or rather, by the \emph{structure identity
principle}~(SIP), a consequence of univalence which asserts that isomorphic
\emph{structured} types (types equipped with operations) are equal, provided
that the iso- \\ morphism is structure-preserving \cite[Section 9.8]{HoTT13}. The gap
is simple: \func{appendReverse} \emph{is not an isomorphism}, as it sends both
$(\con{[]}\,\con{,}\,\con{[}1\,\con{,}\,0\con{]})$ and
$(\con{[}0\con{]}\,\con{,}\,\con{[}1\con{]})$ to $\con{[}0\,\con{,}\,1\con{]}$.

We arrive at the core of our problem: representation independence metatheorems
are unsatisfactory, because we want to leverage representation independence to
prove internal correctness theorems; \texttt{CoqEAL}-style approaches do not allow us to
replace \func{ListQueue}s with \func{BatchedQueue}s in dependently-typed goals;
and univalence does not apply, because our types are not isomorphic.

\paragraph{Contribution}

In this paper, we develop techniques for establishing \emph{internal} relational
representation independence results in dependent type theory, without appealing
to parametricity translations. Taking inspiration from Homotopy Type
Theory/Univalent Foundations (HoTT/UF), we use \emph{higher inductive types}
(HITs) to simultaneously quotient two related implementation types by a
heterogeneous correspondence between them, improving that correspondence to an
isomorphism; we then apply univalence (and in particular, the SIP) to obtain an
equality of implementations. In the case of \func{Queue}s, our technique leaves
\func{ListQueue}s untouched while equating two \func{BatchedQueue}s if and only if
\func{appendReverse} maps them to the same list.

Our results are all formalized in \CubicalAgda{} \citep{VezzosiMortbergAbel19},
an extension of \Agda{} \citep{Agda} which supports univalence and HITs in a
computationally well-behaved way using ideas from Cubical Type Theory
\citep{CCHM18,AngiuliFavoniaHarper18}.

From the perspective of proof reuse, our work bridges the gap between \texttt{CoqEAL} and
univalent parametricity, because it applies to both dependently-typed goals and
non-isomorphism correspondences. In addition, our techniques require less
engineering work than these systems, because we rely on built-in features of
\CubicalAgda{} rather than typeclass instances for univalence or a \Coq{} plugin
to generate parametricity translations. However, we stress that our theorems
hold in any dependent type theory with univalence and HITs, features which can
be added axiomatically to \Coq{} \citep{BauerGross+17} and \Lean{}
\citep{VanDoornVonRaumerBuchholtz17}, among other languages.

\paragraph{Outline}

In \cref{sec:cubicalagda}, we review some features of \CubicalAgda{} that we use
in the remainder of the paper, including univalence and HITs.
In \cref{sec:sip}, we state and prove a version of the SIP,
which states that structure-preserving isomorphisms yield equal
structured types; we also develop reflection-based automation for automatically
deriving the appropriate notion of structure-preservation for a given structure.
In \cref{sec:examples}, we demonstrate how to use the SIP in concert with HITs
to obtain representation independence results for matrices and queues.
In \cref{sec:relational}, we generalize these techniques by introducing the
notion of a \emph{quasi--equivalence relation} (QER), and present the main
technical result of the paper: any structured QER can be improved to a
structured isomorphism between quotients. To illustrate the technique, we apply
our theorem to two implementations of finite multisets.
Finally, in \cref{sec:related}, we conclude with future directions and a
discussion of related work.

\section{Programming in Cubical Type Theory}
\label{sec:cubicalagda}

In this section, we review some basic features of \CubicalAgda{}, a recent
extension of \Agda{}. All code in this paper is formalized in \Agda{} 2.6.2,
which is in development at the time of writing, and our main results have been
integrated into the \texttt{agda/cubical} library.%
  \footnote{The current development version of \Agda{} is available at
  \url{https://github.com/agda/agda}, the \texttt{agda/cubical} library is
  available at \url{https://github.com/agda/cubical}, and our formalization is
  available at \url{https://github.com/agda/cubical/blob/master/Cubical/Papers/RepresentationIndependence.agda}.}
Experts on Cubical Type Theory or HoTT/UF can safely skim this section; readers
who wish to learn more about \CubicalAgda{} are encouraged to consult
\citet{VezzosiMortbergAbel19} for a description of the system, or \citet{CCHM18}
and \citet{CoquandHuberMortberg18} for a detailed exposition of its core type
theory.

\subsection{Equalities as paths}

Most languages with full-spectrum dependent types, including \Agda{} \citep{Agda}, \Coq{}
\citep{Coq}, \Idris{} \citep{Brady13}, and \Lean{} \citep{DeMouraKong+15}, have
two notions of equality. \emph{Definitional equality} is a purely syntactic
notion implemented in typechecking algorithms, which use computation to silently
discharge many trivial obligations. Equations that cannot be discharged in this
way are instead mediated by an \emph{equality type} (or \emph{propositional}
equality) whose proofs are often provided by the user, and appeals to which are
marked explicitly in terms and types.

In those languages, the equality type is defined as an inductive family
$\func{Eq}~A~x~y$ generated by a \emph{reflexivity} constructor $\con{refl}~x :
\func{Eq}~A~x~x$, following \citet{MartinLof75itt}. One can easily show that
equality types satisfy many mathematical properties of equality (e.g., symmetry,
transitivity, congruence) but they famously lack several others, including
\emph{function extensionality}: the property that pointwise-equal functions are
equal as functions \citep{BoulierPedrotTabareau17}.

Principles such as function extensionality, univalence \citep{Voevodsky10cmu},
and excluded middle can be added to type theory as axioms. However, axioms in
type theory lack computational content, causing proofs using axioms to become
``stuck,'' and thus less amenable to definitional equality. More importantly,
type theories with stuck terms cannot serve as programming languages because
they lack \emph{canonicity}, the property that all closed terms of natural
number type compute numerals.

Building on ideas from HoTT/UF, Cubical Type Theory is a computationally
well-behaved extension to dependent type theory whose formulation of equality
types, called \emph{path types}, resolve many longstanding issues with \func{Eq}
types, including the failure of function extensionality, propositional
extensionality (\cref{ssec:univalence}), and the inability to define quotients
(\cref{ssec:hits}). In Cubical Type Theory, paths are defined as maps out of an
\emph{interval} type \func{I} which contains two elements $\iz{} : \func{I}$ and
$\io{} : \func{I}$ that are computationally distinct but logically equivalent,
in the sense that no function can map them to unequal objects. Therefore,
functions $\func{f}:\func{I}\to A$ serve as evidence that $\func{f}(\iz{})$ and
$\func{f}(\io{})$ are equal in $A$, just as paths in topological spaces can be
represented by continuous functions out of the real unit interval $[0,1] \subset
\mathbb{R}$. Iterated equality proofs in $A$ are thus functions $\func{I}^n\to
A$ which correspond topologically to squares, cubes, etc. in $A$, leading to the
term \emph{cubical}.

In \CubicalAgda{}, the interval \func{I} is a primitive type which, in addition
to the elements $\iz{} : \func{I}$ and $\io{} : \func{I}$, is equipped with
three operations---\emph{minimum} (\func{\_∧\_} : \func{I} $\to$
\func{I} $\to$ \func{I}), \emph{maximum} (\func{\_∨\_} : \func{I} $\to$ \func{I}
$\to$ \func{I}), and \emph{reversal} (\func{∼{}\_} : \func{I} $\to$
\func{I})---which satisfy the laws of a \emph{De Morgan algebra}, i.e., a
bounded distributive lattice $(\con{i0},\con{i1},\func{\_∧\_},\func{\_∨\_})$
with a De Morgan involution \func{∼{}\_}. Other formulations of Cubical Type
Theory requiring less structure on \func{I} are also
possible~\citep{AngiuliFavoniaHarper18,FootballHockeyLeague19}.

\paragraph{Paths}

Path types are a special form of dependent function type $(i : \func{I})\to
A\;i$ that specify the behavior of their elements on \con{i0} and \con{i1}:
\ExecuteMetaData[agda/latex/Section2.tex]{PathP}
Here, $\var{A} : \func{I} \to \func{Type}~\var{ℓ}$ is a function from \func{I}
to one of \Agda{}'s universes of types. (\Agda{}'s universes are typically
called \func{Set}, but this paper and the \texttt{agda/cubical} library use
\func{Type} instead because, as we will discuss in \cref{ssec:univalence}, the
term \emph{set} has a technical meaning in HoTT/UF.) Elements of \func{PathP}
are $\lambda$-abstractions
\(
\symb{λ} \var{i} \to t : \func{PathP}\;\var{A}\;t[\substnop {i}
  {\con{i0}}]\;t[\substnop {i} {\con{i1}}]
\)
where $t : A\;i$ for $i: \func{I}$. We can apply a path
$\var{p} : \func{PathP}\;\var{A}\;\var{a}_0\;\var{a}_1$ to $\var{r} : \func{I}$,
obtaining $\var{p}\;\var{r} : \var{A}\;\var{r}$, and paths satisfy $\beta$- and
$\eta$-rules definitionally, just like ordinary \Agda{} functions. The only
difference between $\var{p} : \func{PathP}\;\var{A}\;\var{a}_0\;\var{a}_1$ and a
function $f : (i : \func{I})\to A\;i$ is that $p$ is subject to additional
definitional equalities $\var{p}\;\iz{} = \var{a}_0$ and $\var{p}\;\io{} =
\var{a}_1$.

Because the two ``endpoints'' of a path $\var{a}_0 : \var{A}\;\iz{}$ and
$\var{a}_1 : \var{A}\;\io{}$ have different types, \func{PathP} types in fact
represent \emph{heterogeneous} equalities, or \emph{\func{Path}s over
\func{P}aths} in the terminology of HoTT/UF~\cite[Sect. 6.2]{HoTT13}. We can
recover homogeneous (non-dependent) paths in terms of \func{PathP} as follows:
\ExecuteMetaData[agda/latex/Section2.tex]{Path} %
Here, $\{\var{A}\;\AgdaSymbol{=}\;A\}$ tells \Agda{} to bind the implicit
argument \var{A} (the first \var{A}) to a variable (the second $A$) for use on
the right-hand side. Path types allow us to manipulate equality proofs using
standard functional programming idioms. For instance, a constant path represents
a reflexive equality proof.
\ExecuteMetaData[agda/latex/Section2.tex]{refl}

(Note that \func{refl} is a function, rather than a constructor.) We can
directly apply a function to a path in order to prove that dependent functions
respect equality:
\ExecuteMetaData[agda/latex/Section2.tex]{cong}

We can in fact define \func{cong} for $B : A \to \func{Type}~\ell$, but to
increase readability in the remainder of the paper, we henceforth suppress most
universe levels and arguments such as $\{ A : \func{Type} \}$ that are easily
inferred by the reader. As paths are just functions it is now trivial to
\emph{prove} function extensionality, the property that pointwise equal
functions are equal:
\ExecuteMetaData[agda/latex/Section2.tex]{funext}
The proofs of function extensionality for dependent and $n$-ary functions are
equally direct. Since $\func{funExt}$ is \emph{definable} in \CubicalAgda{}, it
has computational content: namely, to swap the arguments to \var{p}. Functional
programmers might recognize this as a special case of the \texttt{flip}
function.

\paragraph{Transport and composition}

One of the key operations of equality in type theory is \emph{transport}, which
sends equalities between types to coercions between those types. In
\CubicalAgda{}, this principle is a special instance of a primitive called
\func{transp}:
\ExecuteMetaData[agda/latex/Section2.tex]{transport}

Consequences of transport include the substitution principle and, using the
minimum operation 
on the interval, the usual induction principle
for inductively-defined \func{Eq} types:
\ExecuteMetaData[agda/latex/Section2.tex]{subst}

\vspace{-\abovedisplayskip}
\ExecuteMetaData[agda/latex/Section2.tex]{J}

Using \func{J} we can easily reproduce the standard type-theoretic proofs that
\func{≡} is symmetric and transitive. One can also prove these properties
directly using \CubicalAgda{}'s primitive \emph{homogeneous composition}
operation \func{hcomp}, which expresses these ``groupoid laws'' and their
higher-dimensional analogues. Although \func{hcomp} is used extensively in the
\texttt{agda/cubical} library, we do not need to use it directly in this paper.

\subsection{Univalence}
\label{ssec:univalence}


The core innovation of HoTT/UF, and the main motivation behind Cubical Type
Theory, is Voevodsky's \emph{univalence} principle
\citep{Voevodsky10cmu,HoTT13}, which states that any equivalence of types
(written \func{≃}, a higher-dimensional analogue of isomorphism) yields an
equality of types, such that \func{transport}ing along the equality applies the
equivalence \citep{Licata16}:
\ExecuteMetaData[agda/latex/Section2.tex]{ua}

\vspace{-\abovedisplayskip}
\ExecuteMetaData[agda/latex/Section2.tex]{uabeta}

Here, $\func{equivFun} : A\;\func{≃}\;B\to A\to B$ returns the function
underlying an equivalence. Because \func{transport} always produces an
equivalence, univalence essentially states that $A\;\func{≃}\;B$ is not only a
necessary but also a sufficient condition for the existence of an equation
$A\;\func{≡}\;B$.

In the past decade, several libraries of formalized mathematics have been
developed around an axiomatic formulation of univalence, including the UniMath
library in \Coq{} \citep{UniMath}, and the HoTT libraries in \Coq{}
\citep{BauerGross+17}, \Agda{} \citep{AgdaHoTT}, and \Lean{}
\citep{VanDoornVonRaumerBuchholtz17}. In contrast, \func{ua} and \func{uaβ} are
definable in \CubicalAgda{} using the \func{Glue} types of \citet{CCHM18}, and
have computational content. In particular, many instances of \func{uaβ} hold
directly by computation (writing \func{not≃} for a proof that $\func{not} :
\func{Bool}\to\func{Bool}$ is an equivalence):
\ExecuteMetaData[agda/latex/Section2.tex]{transportuanot}

Recalling that all constructions in type theory respect the equality type by
fiat, the force of univalence is therefore that \emph{all constructions respect
equivalence/isomorphism}. For instance, without inspecting the definition of
$\func{IsMonoid} : \func{Type} \to \func{Type}$, we can conclude that an
equivalence $A\;\func{≃}\;B$ gives rise to a path $A\;\func{≡}\;B$, which in turn gives
rise to a path $\func{IsMonoid}~A~\func{≡}~ \func{IsMonoid}~B$. In
\cref{sec:sip}, we will use the structure identity principle to show that a path
between two monoids is exactly a monoid isomorphism in the usual sense.

\paragraph{Sets and propositions}

Univalence refutes \emph{uniqueness of identity proofs} (UIP), or Streicher's
axiom K \citep{Streicher93}, because it produces equality proofs in \func{Type}
that are not equal. For example, we can see via \func{transport} that
\func{ua}\;\func{not≃} and \func{refl} are unequal proofs of
\func{Bool}\;\func{≡}\;\func{Bool}:
\ExecuteMetaData[agda/latex/Section2.tex]{notPathnotrefl}

In the presence of univalence, it is important to keep track of which types
satisfy UIP or related principles expressing the complexity of a type's equality
relation. In the terminology of HoTT/UF, a type satisfying UIP is called an
\emph{h-set} (homotopy set, henceforth simply \emph{set}), while a type whose
elements are all equal is called an \emph{h-proposition} (henceforth
\emph{proposition}):
%
%
\ExecuteMetaData[agda/latex/Section2.tex]{isProp}

\vspace{-\abovedisplayskip}
\ExecuteMetaData[agda/latex/Section2.tex]{isSet}

The empty and unit types are propositions, and propositions are closed under
many type formers. For instance, we can show that a dependent function type
valued in propositions is a proposition, using a variation of function
extensionality:
\ExecuteMetaData[agda/latex/Section2.tex]{isPropPi}

Although all proofs of a proposition are equal, propositions cannot in general
be erased---for example, we will soon see that the property of being an
equivalence is a proposition, but it implies the existence of an inverse
function, which has computational content. We will consider a related
programming application in \cref{eg:cost}.

A set is a type whose equality types are propositions, and hence satisfies UIP.
While Cubical Type Theory has many non-sets, including \func{Type} itself, most
of the types used in this paper will be sets. Hedberg's theorem
states that all types with decidable equality are sets~\citep{Hedberg98}, which
includes many datatypes such as \func{Bool}, \func{ℕ}, and \func{ℤ}. In
addition, any proposition is a set, from which it follows that
\func{isProp}\;\var{A} and \func{isSet}\;\var{A} are propositions.

Finally, a type is \emph{contractible} if it has exactly one element:
\ExecuteMetaData[agda/latex/Section2.tex]{isContr}

(Here, we are using the \texttt{agda/cubical} library's notation for \Agda{}'s
built-in \func{Σ}-types, which are \AgdaKeyword{record}s with constructor
\con{\_,\_}, projections \field{fst} and \field{snd}, and definitional $\eta$.)
We can characterize propositions as types whose equality types are contractible,
just as sets are types whose equality types are propositions. Thus contractible
types, propositions, and sets serve as the bottom three layers of an infinite
hierarchy of types introduced by Voevodsky, known as \emph{h-levels}
\citep{Voevodsky10bonn} or \emph{$n$-types} \citep{HoTT13}.

\paragraph{Equivalences and isomorphisms}

An equivalence $\var{A}\;\func{≃}\;\var{B}$ is a function $A\to B$ such that the
preimage of every point in $B$ is contractible:
\ExecuteMetaData[agda/latex/Section2.tex]{equiv}

(In the HoTT/UF literature, \func{preim} is often called the \emph{homotopy
fiber}.) Every equivalence has an inverse which can be constructed by sending
each $y:B$ to the element of its preimage determined by \func{isContr}. In
practice, we often construct equivalences $\var{A}\;\func{≃}\;\var{B}$ by first
constructing ordinary \emph{isomorphisms} \func{Iso}\;\var{A}\;\var{B}
(quadruples of a function $A\to B$, its inverse $B\to A$, and proofs that these
functions cancel in each direction), then applying the lemma
$\func{isoToEquiv} : \func{Iso}\,A\,B \to A\,\func{≃}\,B$.
%

One might wonder, why introduce equivalences at all? The reason is that
$\func{isEquiv}\,\var{f}$ is always a proposition (by \func{isPropΠ} and the
fact that \func{isContr} is a proposition), and thus two elements of
$\var{A}\;\func{≃}\;\var{B}$ are equal if and only if their underlying functions are equal
(because their \func{isEquiv} proofs must agree). In contrast, the property of a
function $f:A\to B$ being an isomorphism (i.e., the triple of a function $B\to
A$ and proofs that it is left and right inverse to $f$) is \emph{not} a
proposition \citep[Theorem 4.1.3]{HoTT13}. A subtle consequence of this fact is
that if one replaces \func{≃} in the statement of univalence with \func{Iso},
the resulting statement is actually \emph{inconsistent} \citep[Exercise
4.6]{HoTT13}! However, for the purposes of this paper, the reader can safely
imagine equivalences as isomorphisms.

%
%

In dependent type theory, equality in \func{Σ}-types is notoriously difficult to
manage, due to the equality of second projections being heterogeneous (usually
stated with transports). In Cubical Type Theory, however, the natural
heterogeneity of \func{PathP} types allows us to straightforwardly characterize
equality in \func{Σ}-types without any transports:
\ExecuteMetaData[agda/latex/Section2.tex]{sigmaeq}

(Here, \con{iso} is the constructor for the \func{Iso} type.) Therefore, we can
always exchange an equality proof in a \func{Σ}-type with a pair of equalities,
which tends to be much easier than manipulating transports. Finally, by
\func{isoToEquiv} it is easy to show that logically equivalent propositions are
equivalent and thus equal by \func{ua}. This principle, \emph{propositional
extensionality}, is another often-assumed axiom in type theory that is provable
(and hence has computational content) in \CubicalAgda{}.
\ExecuteMetaData[agda/latex/Section2.tex]{propext}

\subsection{Higher inductive types}
\label{ssec:hits}

Finally, \CubicalAgda{} natively supports \emph{higher inductive types} (HITs),
a generalization of inductive datatypes which allows for constructors of
equality type
\citep{HoTT13,LumsdaineShulman19,CavalloHarper19,CoquandHuberMortberg18}. In
HoTT/UF, HITs provide analogues of topological spaces such as the circle,
spheres, and torus; in this paper, we use HITs to take quotients of types by
equivalence relations.

\paragraph{Propositional truncation}

Our first example of a HIT is \emph{propositional truncation}, which quotients a
type by the total relation, yielding a proposition. It has two constructors:
\con{|\_|}, which includes elements of \var{A} as elements of \ptrunc{A}, and
\con{squash}, which equates any two elements of \ptrunc{A}.
\ExecuteMetaData[agda/latex/Section2.tex]{proptrunc}

We write functions out of \ptrunc{A} by pattern-matching, noting that
$\con{squash}\;\var{x}\;\var{y}\;\var{i}$ constructs an element of \ptrunc{A}.
For example, we can define the functorial action of \ptrunc{-} on a function as
follows:
\ExecuteMetaData[agda/latex/Section2.tex]{maptrunc}

In addition to checking that these clauses are well-typed, \CubicalAgda{} must
also check that the clause for $\con{squash}\;\var{x}\;\var{y} :
\var{x}\;\func{≡}\;\var{y}$ is a path between $\func{map}\;\var{f}\;\var{x}$ and
$\func{map}\;\var{f}\;\var{y}$, by substituting \iz{} and \io{} for $\var{i} :
\func{I}$ in the left- and right-hand sides of the definition.

We have already seen with \func{isEquiv} that propositions can have
computational content. In the following example, we use propositional truncation
to hide the mathematical content of a type without disturbing its computational
content.

\begin{example}[Cost monad] \label{eg:cost}
By pairing an element of \ptrunc{\func{ℕ}} with the output of a function, we
obtain a counter that can be incremented at will but whose value is ``hidden''
to the equality type:
\ExecuteMetaData[agda/latex/Section2.tex]{cost}

We then define a monad structure on this type which counts the number of binds
as a simple proxy for computation steps, by starting at \anum{0} and
incrementing once at each bind ($\func{\_>>=\_}$). Using \func{Cost≡}, it is
straightforward to prove that these definitions satisfy the monad laws. (In the
definition of $\func{\_>>=\_}$, \func{map2} is a binary version of \func{map}.)
\ExecuteMetaData[agda/latex/Section2.tex]{costmonad}

The computational behavior of this monad is precisely the same as one in which
the propositional truncation is omitted. For example, if we compare a naive
implementation of Fibonacci to a tail-recursive one, the former requires
many more recursive calls, as expected:

\vspace{-\abovedisplayskip}
\begin{center}
\begin{minipage}[t]{1.0\linewidth}
\begin{minipage}[t]{0.5\linewidth}
\ExecuteMetaData[agda/latex/Section2.tex]{fib}
\end{minipage}%
\begin{minipage}[t]{0.5\linewidth}
\ExecuteMetaData[agda/latex/Section2.tex]{fibtail}
\end{minipage}
\end{minipage}
\end{center}

However, because we have truncated the cost, we are able to
prove that \func{fib} and \func{fibTail} are \emph{equal} as functions using
\func{Cost≡}, despite having different runtime behavior:
$\func{fibEq} : \func{fib}\,\func{≡}\,\func{fibTail}$.
%
\end{example}

\paragraph{Set quotients}

The main HIT that we use in this paper is the \emph{set quotient}, which
quotients a type by an arbitrary relation, yielding a set. It has three
constructors: \con{[\_]}, which includes elements of the underlying type,
\con{eq/}, which equates all pairs of related elements, and \con{squash/}, which
ensures that the resulting type is a set:%
  \footnote{If we omitted \con{squash/}, then the unit type quotiented by the
  total relation would be a HIT generated by \con{[ tt ]} and a path from
  \con{[ tt ]} to itself. But this is the circle type \func{S¹}, which is not a
  set \citep[Section 8.1]{HoTT13}.}
\ExecuteMetaData[agda/latex/Section2.tex]{setquot}

Once again, we can write functions out of $A\;\func{/}\;R$ by pattern-matching;
this amounts to writing a function out of $A$ (the clause for \con{[\_]}) which
sends $R$-related elements of $A$ to equal results (the clause for \con{eq/}),
such that the image of the function is a set (the clause for \con{squash/}). We
therefore obtain the universal property of set quotients, which states that
functions from $A~\func{/}~R$ to a set are precisely the functions out of $A$
that respect $R$:
\ExecuteMetaData[agda/latex/Section2.tex]{setquotuniv}

If we additionally assume that \var{R} is a proposition-valued equivalence
relation, then we can show (by propositional extensionality) that the set
quotients are \emph{effective}, in the sense that if
$\con{[}\;a\;\con{]}\;\func{≡}\;\con{[}\;b\;\con{]}$ then $R\;a\;b$
\citep{Voevodsky15unimath}.

\begin{example}[Rational numbers]\label{eg:q}
In \CubicalAgda{}, we can define rational numbers \func{ℚ} (below, left) as
pairs of integers and nonzero natural numbers, set quotiented by equality of
cross multiplication. Contrast this with the standard type-theoretic definition
of rational numbers \func{ℚ'} (below, right) as pairs of \emph{coprime} numbers
\citep[\texttt{rat}]{MathComp}:

\begin{center}
\begin{minipage}[t]{1.0\linewidth}
\begin{minipage}[b]{0.45\linewidth}
\ExecuteMetaData[agda/latex/Section2.tex]{setquotQ}
\end{minipage}%
\begin{minipage}[b]{0.5\linewidth}
\ExecuteMetaData[agda/latex/Section2.tex]{setquotQ'}
\end{minipage}
\end{minipage}
\end{center}
\vspace{-\belowdisplayskip}

In essence, \func{ℚ'} represents rational numbers by choosing a normal-form
representative of each equivalence class; because these normal forms are unique,
one avoids the need for quotients or setoids. However, the tradeoff is that all
operations on \func{ℚ'} must maintain the coprimality invariant---which can be
prohibitively expensive in practice \citep{ChyzakMahboubi+14}---whereas
operations on the set quotient \func{ℚ} are free to return any representative as
their result.
\end{example}

Many quotients, such as the untyped $\lambda$-calculus modulo
$\beta$-conversion, cannot be defined in type theory without HITs, as they lack
a normal form representation. This can be addressed by passing to
\emph{setoids}, which supplant the equality type in favor of an explicitly-given
equivalence relation for each type \citep{BartheCaprettaPons03}. Users of
setoids must manually prove that operations respect these relations, which is
laborious and can lead to subtle bugs: it was recently discovered that the
algebraic hierarchy in \Agda{}'s standard library incorrectly allows a type of
rings to be equipped with one setoid structure used by the addition and a
different setoid structure used by the
multiplication.\footnote{\url{https://lists.chalmers.se/pipermail/agda/2020/012009.html}}

\section{The structure identity principle}
\label{sec:sip}

The structure identity principle (SIP) is the informal principle that properties of mathematical
structures should be invariant under isomorphisms of such structures. Making this precise requires
committing to a notion of structure; consequently, different formalizations of the SIP are possible,
varying in how structures are represented as well as in their scope and generality. The definition
we present is a slight reformulation of one given by \citet{Escardo19}; ours differs by phrasing the
condition in terms of dependent paths, which are particularly convenient in Cubical Type Theory.

\subsection{Structures}

We begin by defining what we mean by a structure; for us, structures are defined over a carrier type
and are equipped with a notion of structure-preserving equivalence. Then, we single out the
structures (with their equivalences) that are well-behaved enough for the SIP to apply, which we
call univalent structures. The basic definitions closely follow Escard\'{o}'s account.

A \emph{structure} is a function $S:\func{Type}\rightarrow\func{Type}$. The type of
\emph{$S$-structures} is defined as follows.
\ExecuteMetaData[agda/latex/Section3.tex]{TypeWithStr}
A notion of $S$-\emph{structure-preserving equivalences} is a term
{$\iota$ : \func{StrEquiv} $S$}, where
\ExecuteMetaData[agda/latex/Section3.tex]{StrEquiv}
Given two $S$-structures $A~B : \func{TypeWithStr}~S$ and an equivalence between the underlying
types $e : \AgdaField{fst}\;A\;\func{≃}\; \AgdaField{fst}\;B$, the type $\iota\; A\; B\; e$ is the
type of proofs that $e$ is an $S$-\emph{structure-preserving equivalence} between $A$ and $B$.  The
type of $S$-structure-preserving equivalences (henceforth, $S$-\emph{structured equivalences})
between $A$ and $B$ is then given by
\ExecuteMetaData[agda/latex/Section3.tex]{Iso}
We say that $(S , \iota)$ defines a \emph{univalent structure}
if we have a term of the following type.
\ExecuteMetaData[agda/latex/Section3.tex]{UnivalentStr}

A univalent structure is a pair $(S,\iota)$ which satisfies the SIP. Our definition is equivalent to
Escard\'{o}'s \emph{standard notion of structure}, but interacts better with cubical machinery.

\begin{theorem}[SIP]
  For $S:\func{Type}\rightarrow\func{Type}$ and $\iota : \func{StrEquiv}\; S$, we have a term
  \ExecuteMetaData[agda/latex/Section3.tex]{SIP}
\end{theorem}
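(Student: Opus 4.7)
The plan is to take a structured equivalence and split its data into two pieces of path information which together assemble into a path in the $\Sigma$-type $\func{TypeWithStr}\,S$. Concretely, a structured equivalence consists of an underlying type equivalence $e : \AgdaField{fst}\,A \func{≃} \AgdaField{fst}\,B$ together with a witness $\theta : \iota\,A\,B\,e$ that $e$ preserves the $S$-structure. These two components will give us, respectively, a path between the carrier types and a dependent path between the structures.

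First, I would apply univalence to $e$ to obtain a path $p \mathrel{:=} \func{ua}\,e : \AgdaField{fst}\,A \func{≡} \AgdaField{fst}\,B$ on the carriers. Second, I would use the univalence hypothesis for $(S,\iota)$: by assumption we have a term of type \func{UnivalentStr}\,$S$\,$\iota$, which (in the standard formulation due to Escard\'o, adapted here to use dependent paths) supplies, for every $A$, $B$, and $e$, an equivalence between $\iota\,A\,B\,e$ and $\func{PathP}\,(\lambda i \to S\,(\func{ua}\,e\,i))\,(\AgdaField{snd}\,A)\,(\AgdaField{snd}\,B)$. Applying the forward direction of this equivalence to $\theta$ yields a dependent path $q$ between the two $S$-structures lying over $p$. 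Finally, I would package $p$ and $q$ into a path $A \func{≡} B$ in $\func{TypeWithStr}\,S$ by invoking the $\Sigma$-type characterization \func{ΣPath≃PathΣ} from \cref{ssec:univalence}, which takes a pair consisting of a path on first components and a \func{PathP} on second components and returns a genuine path in the $\Sigma$-type.

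The main obstacle, and really the only subtlety, is making sure the formulation of \func{UnivalentStr} lines up cleanly with \func{ΣPath≃PathΣ}. In particular, we need the $S$-structures on the two sides of the heterogeneous path to live in $S\,(\func{ua}\,e\,\iz)$ and $S\,(\func{ua}\,e\,\io)$ which compute to $S\,(\AgdaField{fst}\,A)$ and $S\,(\AgdaField{fst}\,B)$; this is exactly why Escard\'o's original formulation has been rephrased here in terms of \func{PathP} over $\lambda i \to S\,(\func{ua}\,e\,i)$ rather than an equality of transported structures. With that phrasing in place, no further transport manipulation is required, and the three steps chain together definitionally.

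In summary, the proof reduces to the composite
\[
(e,\theta) \;\longmapsto\; (\func{ua}\,e,\; \text{\func{UnivalentStr}-witness}\,\theta) \;\longmapsto\; \text{path in }\func{TypeWithStr}\,S,
\]
where the first arrow uses \func{ua} on carriers and the univalent-structure hypothesis on structures, and the second arrow is \func{ΣPath≃PathΣ}.
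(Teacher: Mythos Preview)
Your approach uses the same ingredients as the paper---univalence for the carriers, the \func{UnivalentStr} hypothesis for the structures, and \func{ΣPath≃PathΣ} to assemble---but there is a gap: the theorem asserts an \emph{equivalence} $\tyEquiv{(\tyStructEq{A}{\iota}{B})}{(\tyPath{A}{B})}$, not merely a function. Your three steps each apply only the forward direction of an equivalence, so what you have written down is the underlying map \func{sip}, not \func{SIP} itself.

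The fix is immediate once noticed: each of your steps is in fact an equivalence (univalence, the \func{UnivalentStr} witness, and \func{ΣPath≃PathΣ}), so you need only observe that your composite is a composite of equivalences rather than of their forward maps. The paper arranges this explicitly by first applying \func{ΣPath≃PathΣ} to reduce the goal to an equivalence of $\Sigma$-types, and then invoking a $\Sigma$-congruence lemma: given $e_X : \tyEquiv{X}{X'}$ and a family $e_Y : (x : X) \to \tyEquiv{Y\,x}{Y'\,(\func{equivFun}\,e_X\,x)}$, one obtains $\tyEquiv{(\tySigma{x}{X}{Y\,x})}{(\tySigma{x'}{X'}{Y'\,x'})}$. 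Here $e_X$ is supplied by univalence and $e_Y$ by the \func{UnivalentStr} hypothesis. Your pointwise description recovers exactly the underlying map of this equivalence, which the paper also records separately as \func{sip}.
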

\begin{proof}
  Suppose $\theta:\func{UnivalentStr}\;S\;\iota$, and let $A~B : \func{TypeWithStr}~S$. By applying
  \func{ΣPath≃PathΣ} on the right, we reduce the goal to the following.
  \begin{align*}
    \tyEquiv
    {\tySigma{e}{\tyEquiv{\field{fst}\; A}{\field{fst}\; B}}{(\iota\; A\; B\; e)}}
    {\tySigma{p}{\tyPath{\field{fst}\; A}{\field{fst}\; B}}{(\func{PathP}\; (\symb{λ}\; i\; \symb{→}\; S\; (p\; i))\; (\field{snd}\; A)\; (\field{snd}\; B)})}
  \end{align*}
  The first components are equivalent by univalence, while the input $\theta$ is a proof that the
  second components are equivalent over the first equivalence. It is straightforward to check that
  any pair of equivalences $e_X : \tyEquiv{X}{X'}$ and
  $e_Y : (x : X) \to \tyEquiv{Y\;x}{Y'\;(\func{equivFun}\;e_X\;x)}$ gives rise to an equivalence
  $\tyEquiv{\tySigma{x}{X}{(Y\;x)}}{\tySigma{x'}{X'}{(Y'\;x')}}$; applying this lemma concludes the proof.
\end{proof}

Writing the underlying map of this equivalence explicitly, we have the following.
\ExecuteMetaData[agda/latex/Section3.tex]{sipmap} \qedhere
The \func{SIP} itself is thus a nearly trivial consequence of the definition of univalent
structure. The work, then, is in equipping structures with useful definitions of structured
equivalence that satisfy \func{UnivalentStr}. (Note that there is always a distinctly \emph{useless}
definition of structured equivalence for any structure $S$:
$\iota\;A\;B\;e = \func{PathP}\;(\symb{λ}\; i\; \symb{→}\; S\; (p\; i))\; (\field{snd}\; A)\;
(\field{snd}\; B)$, which is trivially univalent.) In the following, we show how this can be done
systematically and with support by automation.

\paragraph{Axioms}
A structure can typically be separated into two main components: a \emph{raw structure} consisting
of operations on the carrier type, and propositional \emph{axioms} (often in the form of equations)
governing the behavior of the raw structural components. As observed by \citet{Escardo19}, the
latter can be ignored for the purpose of defining structured equivalences.
\begin{definition}
  Let $S:\func{Type}\rightarrow\func{Type}$ be a structure and $\iota$ : \func{StrEquiv} $S$.
  Suppose we have axioms on $S$-structures in the form of a predicate
  $ax:\func{TypeWithStr}\;S\to\func{Type}$. We then define
\ExecuteMetaData[agda/latex/Section3.tex]{Axioms}
\end{definition}

\begin{lemma}\label{AxiomsUnivalentStr}
  Let $S$, $\iota$, and $\var{ax}$ be as above, and assume that \var{ax} is
  proposition-valued. Given some $\theta : \func{UnivalentStr}\;S\;\iota$, we have
  $\func{UnivalentStr}\; (\func{AxiomsStr}\; S\; \var{ax})\; (\func{AxiomsEquivStr}\; \iota\;
  \var{ax})$.
\end{lemma}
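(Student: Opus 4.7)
The plan is to prove the required equivalence
\[
\func{AxiomsEquivStr}\;\iota\;\var{ax}\;A\;B\;e \;\simeq\; \func{PathP}\;(\symb{λ}\; i\; \symb{→}\; (\func{AxiomsStr}\;S\;\var{ax})\;(\func{ua}\;e\;i))\;(\field{snd}\;A)\;(\field{snd}\;B)
\]
for arbitrary $A, B : \func{TypeWithStr}\;(\func{AxiomsStr}\;S\;\var{ax})$ and an equivalence $e$ on carriers, by unfolding $\func{AxiomsStr}$ as a $\Sigma$-type and decomposing the right-hand $\func{PathP}$ accordingly. Since $\func{AxiomsStr}\;S\;\var{ax}\;X = \tySigma{s}{S\;X}{\var{ax}\;(X, s)}$, a dependent version of $\func{ΣPath≃PathΣ}$ shows that a path-over in this $\Sigma$-type is equivalent to a $\Sigma$ of two path-overs: one in the $S$-component and one in the axiom component over the first.

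First, I would apply this decomposition to reduce the right-hand side to
$
\tySigma{p_s}{\func{PathP}\;(\symb{λ}\;i\; \symb{→}\;S\;(\func{ua}\;e\;i))\;s_A\;s_B}
{\func{PathP}\;(\symb{λ}\;i\; \symb{→}\;\var{ax}\;(\func{ua}\;e\;i, p_s\;i))\;a_A\;a_B},
$
where $(s_A, a_A)$ and $(s_B, a_B)$ are the two structures. By the assumption $\theta : \func{UnivalentStr}\;S\;\iota$, the type of the first component $p_s$ is equivalent to $\iota\;(\field{fst}\;A, s_A)\;(\field{fst}\;B, s_B)\;e$, which by the definition of $\func{AxiomsEquivStr}$ is precisely $\func{AxiomsEquivStr}\;\iota\;\var{ax}\;A\;B\;e$ (the definition forgets the axiom data and invokes $\iota$ on the raw structure parts).

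Second, I would argue that the axiom $\func{PathP}$ fiber is contractible and can thus be dropped. Since $\var{ax}$ is proposition-valued by hypothesis, the family $\symb{λ}\;i\; \symb{→}\;\var{ax}\;(\func{ua}\;e\;i, p_s\;i)$ takes values in propositions; a $\func{PathP}$ in such a family is contractible, since one endpoint can be transported to the other fiber and then identified by propositionality, with uniqueness following by the same token. A $\Sigma$-type with contractible fibers is equivalent to its base, so the whole right-hand side reduces to the type of first components, namely $\func{AxiomsEquivStr}\;\iota\;\var{ax}\;A\;B\;e$.

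The main obstacle is purely bookkeeping: composing the $\Sigma$-decomposition of $\func{PathP}$ with $\theta$ so that the resulting equivalence genuinely matches the one prescribed by $\func{UnivalentStr}\;(\func{AxiomsStr}\;S\;\var{ax})\;(\func{AxiomsEquivStr}\;\iota\;\var{ax})$, rather than only being equivalent up to some propagation of equivalences. The contractibility of $\func{PathP}$ in propositions is a small, self-contained lemma, and the rest is a straightforward chain of standard $\Sigma$- and $\func{PathP}$-manipulations in \CubicalAgda{}.
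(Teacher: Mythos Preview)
Your proposal is correct and follows essentially the same route as the paper: both proofs apply $\func{ΣPath≃PathΣ}$ to split the $\func{PathP}$ in $\func{AxiomsStr}\;S\;\var{ax}$ into a pair of $\func{PathP}$s, use $\theta$ to identify the $S$-component with $\iota$ (which is by definition $\func{AxiomsEquivStr}$), and discard the axiom component by observing that a $\func{PathP}$ in a family of propositions is contractible. The only difference is cosmetic ordering of the steps.
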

\begin{proof}
  Let $(\tmPair{X}{\tmPair{s}{a}})$ and $(\tmPair{Y}{\tmPair{t}{b}}) :
  \func{TypeWithStr}\;(\func{AxiomsStr}\; S\; \var{ax})$ and $e : \tyEquiv{X}{Y}$.
  By $\theta\;e$ and the definition of $\func{AxiomsEquivStr}$,
  $\func{AxiomsEquivStr}\;\iota\;(\tmPair{X}{\tmPair{s}{a}})\;(\tmPair{Y}{\tmPair{t}{b}})$ is
  equivalent to $\tyPathP{i}{S\;(\func{ua}\;e\;i)}{s}{t}$. To see that this is
  equivalent to
  $\tyPathP{i}{\func{AxiomsStr}\;S\;(\func{ua}\;e\;i)}{(\tmPair{s}{a})}{(\tmPair{t}{b})}$,
  note that by $\func{ΣPath≃PathΣ}$, the latter is equivalent to the following
  $\func{Σ}$-type:

  \begin{displaymathcode}
    \tySigma{p}{\tyPathP{i}{S\;(\func{ua}\;e\;i)}{s}{t}}%
    {(\tyPathP{i}{\var{ax}\;(\tmPair{\func{ua}\;e\;i}{p\;i})}{a}{b})}
  \end{displaymathcode}

  Because we have assumed $\var{ax}\;(\tmPair{\func{ua}\;e\;i}{p\;i})$ is a
  proposition, the second component is contractible, and so the $\func{Σ}$-type
  is equivalent to its first component as needed.
\end{proof}

We can use the SIP to transport proofs of axioms between equivalent raw structures to obtain
equivalent structures-with-axioms.

\begin{corollary}[Induced Structures]\label{transportStrAxioms}
  Let $S$, $\iota$ and $\var{ax}$ be as above, and $Θ : \func{UnivalentStr}\;S\;\iota$.
  Suppose we have $(\tmPair{X}{\tmPair{s}{a}}) :
  \func{TypeWithStr}\;(\func{AxiomsStr}\;S\;\var{ax})$
  and $(\tmPair{Y}{t}) : \func{TypeWithStr}\;S$ and a structured equivalence
  $e : \tyStructEq{(\tmPair{X}{s})}{\iota}{(\tmPair{Y}{t})}$. Then there exists
  $b : \var{ax}\;(\tmPair{Y}{t})$ with
  $\tyPath{(\tmPair{X}{\tmPair{s}{a}})}{(\tmPair{Y}{\tmPair{t}{b}})}$.
\end{corollary}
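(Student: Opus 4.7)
The plan is to feed the structured equivalence $e$ through the SIP for the raw structure $(S,\iota)$ to obtain a base path, transport the axiom witness along that base path, and then upgrade the pair to a path between structures-with-axioms by exploiting the fact that $\var{ax}$ is proposition-valued.

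First, I apply \func{SIP} to the univalent structure $(S,\iota)$ and the structured equivalence $e : \tyStructEq{(\tmPair{X}{s})}{\iota}{(\tmPair{Y}{t})}$, producing a path $p : \tyPath{(\tmPair{X}{s})}{(\tmPair{Y}{t})}$ in $\func{TypeWithStr}\;S$. Second, I define the missing axiom witness by transport along $p$: $b := \func{subst}\;\var{ax}\;p\;a$, which has type $\var{ax}\;(\tmPair{Y}{t})$ as required.

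It remains to construct a path $\tyPath{(\tmPair{X}{\tmPair{s}{a}})}{(\tmPair{Y}{\tmPair{t}{b}})}$ in $\func{TypeWithStr}\;(\func{AxiomsStr}\;S\;\var{ax})$. Applying $\func{ΣPath≃PathΣ}$ twice (exactly as in the proof of \cref{AxiomsUnivalentStr}) reduces this goal to providing a path of types, a dependent $S$-path over it, and a dependent $\var{ax}$-path over those. The first two components are precisely the two components of $p$, while the third is a path in a dependent family of propositions with prescribed endpoints $a$ and $b$; this is automatic, since \var{ax} is assumed proposition-valued, so the relevant $\func{PathP}$ type is inhabited (indeed contractible). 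An equivalent but slicker packaging is to observe that $e$ is, by unfolding $\func{AxiomsEquivStr}$, already a structured equivalence between $(\tmPair{X}{\tmPair{s}{a}})$ and $(\tmPair{Y}{\tmPair{t}{b}})$ for the augmented univalent structure of \cref{AxiomsUnivalentStr}; applying the SIP for that univalent structure then yields the desired path in one shot.

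There is no real obstacle: the construction of $b$ as a \func{subst} is forced, and the propositionality of \var{ax} eliminates the only potentially delicate coherence. The main thing to be careful about is threading the dependent paths correctly through the two nested $\func{Σ}$-types, which is handled uniformly by $\func{ΣPath≃PathΣ}$.
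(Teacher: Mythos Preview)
Your proposal is correct, and the ``slicker packaging'' you describe at the end is exactly the paper's proof: define $b$ by \func{subst}\;\var{ax}\;(\func{sip}\;e)\;a$, observe that $e$ is then already an $\func{AxiomsEquivStr}$-structured equivalence by definition, and apply \func{sip} again via \cref{AxiomsUnivalentStr}. Your first route through two applications of \func{ΣPath≃PathΣ} and propositionality of \var{ax} also works and is essentially an unfolding of that second \func{sip} call.
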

\begin{proof}
  Set $b\;\symb{=}\;\func{subst}\;\var{ax}\;(\func{sip}\;e)\;a$; recall that
  $\func{sip}\;e : \tyPath{(\tmPair{X}{s})}{(\tmPair{Y}{t})}$, so $b$ has type
  $\var{ax}\;(\tmPair{Y}{t})$. It follows by definition that
  $e :
  \tyStructEq{(\tmPair{X}{\tmPair{s}{a}})}{\func{AxiomsEquivStr}\;\iota\;\var{ax}}{(\tmPair{Y}{\tmPair{t}{b}})}$,
    from which we get the desired path by applying \func{sip} once more.
\end{proof}

Before we show how to build raw structures, we first illustrate the utility of the above lemmas.

\begin{example}[Monoids]
We say that $X$ supports a \emph{raw} monoid structure when it has a distinguished neutral element
\var{ε} and a binary operation $\_\cdot\_$:
\ExecuteMetaData[agda/latex/Section3.tex]{RawMonoidStr}
A raw monoid is simply a type equipped with a raw monoid structure:
\ExecuteMetaData[agda/latex/Section3.tex]{RawMonoid}.
A (fully-cooked) monoid structure is a raw monoid structure whose underlying type is a set, and
whose binary operation is unital and associative:
\ExecuteMetaData[agda/latex/Section3.tex]{MonoidStrWithAxioms}
The type of monoids is thus \ExecuteMetaData[agda/latex/Section3.tex]{Monoid}. A structured
equivalence of monoids is a monoid isomorphism in the usual sense: an equivalence that commutes with
the monoid operations.
\ExecuteMetaData[agda/latex/Section3.tex]{MonoidEquiv}
It is now easy to see that the monoid axioms are all proposition-valued: the first axiom requires
the carrier type to be a set, which forces the equations to be propositions. In order to apply the
SIP to monoids, it remains only to check that \func{RawMonoidStructure} is a \func{UnivalentStr}.
Proving this by hand is not so easy, but rather than proving it directly, we will build it up from
elementary combinators in a way that preserves \func{UnivalentStr}.
\end{example}

\subsection{Building structures}
\label{subsec:building}

Having dealt with axioms, we now define a collection of combinators from which we can build raw
structures, their structured equivalences, and proofs they are univalent. We currently support:

\vspace{\abovedisplayskip}
\begin{tabular}{>{\itshape}l<{} >{$}r<{$} >{$}l<{$}}
  Structures & S\;X,T\;X \coloneqq & X \mid A \mid \tyProduct{S\;X}{T\;X} \mid S\;X \to T\;X \mid \tyMaybe{(S\;X)}
\end{tabular}
\vspace{\belowdisplayskip}

The raw fragment of all structures considered in this paper can be built using the above grammar.
(Of course, $\data{Maybe}$ is just one of many inductive types we could consider.)

\paragraph{Constant and pointed structures}

Our two base cases are the constant structure $\symb{λ}\;\_ \to A$ and the pointed structure
$\symb{λ}\;X \to X$. Given $(\tmPair{X}{a})\;(\tmPair{Y}{a'}) : \func{TypeWithStr}\;(\symb{λ}\;\_
\to A)$, an equivalence $e : \tyEquiv{X}{Y}$ is structured when we have a path $\tyPath{a}{a'}$;
this is trivially univalent. For $(\tmPair{X}{x})\; (\tmPair{Y}{y}) :
\func{TypeWithStr}\;(\symb{λ}\;X \to X)$, $e : \tyEquiv{X}{Y}$ is structured when
$\tyPath{\func{equivFun}\;e\;x}{y}$, i.e., the chosen elements are related by the equivalence.
Univalence of the pointed structure is a consequence of $\func{uaβ}$.

\paragraph{Product structures}

We define the product structure of two structures equipped with notions of structured equivalence
$(S_1,\iota_1)$ and $(S_2,\iota_2)$, as follows:
\ExecuteMetaData[agda/latex/Section3.tex]{Products}
The product of \func{UnivalentStr}uctures is univalent:
\ExecuteMetaData[agda/latex/Section3.tex]{productUnivalentStr}
The corresponding HoTT/UF-proof of \citet{Escardo19} uses the type-theoretic Yoneda lemma
\cite[\S2.8]{Rijke12}, but the nice interplay of \func{Σ}-types and dependent path types in
\CubicalAgda{} makes it simple to construct the desired equivalence in the conclusion directly.
\ExecuteMetaData[agda/latex/Section3.tex]{productUnivalentStrDef}
Here, the lemma \func{Σ-cong-equiv} with a non-dependent second argument gives an equivalence of
products from the equivalences on the projections $θ_1\;e$ and $θ_2\;e$.

\paragraph{Function structures}

We say that two $(\lambda\,X \to S\,X \to T\,X)$-structures $f$ and $g$ are related over $e$ when
they take related $S$-structures to related $T$-structures:
\ExecuteMetaData[agda/latex/Section3.tex]{FunctionEquivStr}

\paragraph{Maybe structures}

The definition of structured equivalence for $\symb{λ}\,X → \data{Maybe}\,(S\,X)$ proceeds by cases on
the pair of structures:
\ExecuteMetaData[agda/latex/Section3.tex]{MaybeEquivStr}

\paragraph{Transport structures}

Although we have given a single definition for each type former thus far, we note that
there is often more than one potentially useful definition of structured equivalence for a given
$S : \Type \to \Type$. For example, structured equivalences for $λ\;X → \tyMaybe{X}$ can also be
defined using the functorial action of $\data{Maybe}$ as follows.
\ExecuteMetaData[agda/latex/Section3.tex]{MaybeEquivStr'}
Indeed, this alternative definition is sometimes more convenient; we will use it (in slightly
generalized form) in \cref{subsec:queues}. It belongs to a class of structured equivalence
definitions that arise from a functorial action on equivalences (here, $\func{map-Maybe}$) and which
are captured by the SIP of \citet{CoquandDanielsson13}. We call these \emph{transport structures}
and define them as follows.
\ExecuteMetaData[agda/latex/Section3.tex]{EquivAction}
Rather than a type of proofs that an equivalence is structured, we have an action by the structure
on equivalences; this action is ``correct'' when it agrees with \func{transport}.
(\citeauthor{CoquandDanielsson13} require only that
$\func{equivFun}\;(\alpha\;(\func{idEquiv}\;X))\;s \equiv s$, but univalence implies that this is
equivalent to our definition.) Any such action gives rise to a notion of structured equivalence
defined as follows.
\ExecuteMetaData[agda/latex/Section3.tex]{ActionToStr}
When we have an element of $\func{TransportStr}\;\alpha$, this notion of structured equivalence is
univalent. In fact, $\func{TransportStr}\;\alpha$ and
$\func{UnivalentStr}\;(\func{EquivAction→StrEquiv}\;\alpha)$ are equivalent conditions. Note,
however, that an element of $\func{StrEquiv}\;S$ does not induce an element of
$\func{EquivAction}\;S$ in any useful way; thus our primary definition of structure is the more
permissive of the two, although transport structures are also closed under the grammar at the head
of this section.

Transport structures are particularly convenient when we define structures on function types.  If we
have an action on the domain, however, we can give a more convenient definition as follows.
\ExecuteMetaData[agda/latex/Section3.tex]{FunctionEquivStr+}

\begin{example}[Monoids revisited]\label{eg:monoids-revisited}
  By \cref{AxiomsUnivalentStr}, to prove the SIP for monoids it remains only to show that
  \func{RawMonoidEquiv} is univalent, which we are now ready to do in a systematic way. First,
  observe that pointed and binary operation structures $\lambda\,X\to X$ and $\lambda\,X\to (X\to
  X\to X)$ are univalent with their canonical notions of structured equivalence. This is immediate
  for pointed structures; for binary operations, we obtain the result by applying
  \func{FunctionEquivStr+} twice to the pointed structure. Finally, we
  take the product of these two structures to see that \func{RawMonoidEquiv} defines a univalent
  structure. Putting everything together, we get the desired SIP for monoids:
\ExecuteMetaData[agda/latex/Section3.tex]{MonoidPath}

Let us now illustrate the utility of \cref{transportStrAxioms}. Given $M : \func{Monoid}$ and $N :
\func{RawMonoid}$ and a $\func{RawMonoidEquiv}$ between $N$ and the underlying \func{RawMonoid} of
$M$, \cref{transportStrAxioms} allows us to transport the axioms from $M$ to $N$ so that we obtain
an induced $N' : \func{Monoid}$ with $\tyPath{M}{N'}$. Specialized to the case of unary and binary
numbers, called \func{ℕ} and \func{Bin} in \texttt{agda/cubical} \cite[Section
2.1]{VezzosiMortbergAbel19}, we could show that these types are equal \func{Monoid}s by proving only
that $(\anum{0},+)$ is a monoid on \func{ℕ}, and defining $+_\func{Bin}$ such that the function
underlying the \func{ℕ≃Bin} equivalence is a monoid homomorphism. By definition, \func{ℕ→Bin} sends
\anum{0} to \con{bin0}, so we only have to show that
\begin{displaymathcode}
(x\,y : \func{ℕ}) \to \tyPath{\func{ℕ→Bin}\,(x + y)}{(\func{ℕ→Bin}\,x) +_\func{Bin} (\func{ℕ→Bin}\,y)}
\end{displaymathcode}
\noindent
which can be readily achieved by \func{ℕ}-induction. Therefore, we can
transport the theory of natural numbers from \func{ℕ} to \func{Bin}
without doing any proofs by \func{Bin}-induction. This is
similar to \texttt{CoqEAL}, in which such morphism lemmas were used to
ensure that all proofs could be completed on proof-oriented types while
computation-oriented types were only used for programming, thereby achieving
a clear \emph{separation of concerns} \citep{Dijkstra74}.
\end{example}

\Cref{eg:monoids-revisited} suggests that other algebraic structures can be treated similarly;
indeed, the same proof strategy applies directly to more complex structures like groups and rings.
In fact, these proofs are so uniform that we can automate them.

\paragraph{Automation}

Using \Agda{}'s reflection mechanism, we have defined tactics for automatically generating
definitions of structured equivalence and proofs they are univalent. For
instance, the macro
$\AgdaMacro{AutoEquivStr}\,(\symb{λ}\,(X : \Type) → \tyProduct{X}{(X → X → X)})$ generates a
definition of structured equivalence for raw monoid structures, while
$\AgdaMacro{autoUnivalentStr}\,(\symb{λ}\,(X : \Type) → \tyProduct{X}{(X → X → X)})$ produces a
proof that the definition is univalent. Our tactics opt for transport structures in the domain of a
function type and univalent structures otherwise, but we allow annotations to override the default
choice; for example, the following definition of equivalence for a queue data structure will use the
transport structure for the codomain of the final function.
\ExecuteMetaData[agda/latex/Queue.tex]{RawQueueEquivMacro}
Given such a definition, the macro \AgdaMacro{AutoStructure} removes annotations to produce the
actual structure definition, in this case
$\symb{λ}\,(X : \Type) \to \tyProduct{X}{\tyProduct{(A \to X \to X)}{(X \to
    \tyMaybe{(\tyProduct{X}{A})})}}$.

\section{Representation independence through the SIP}
\label{sec:examples}

Before showing how to generalize the SIP to relational correspondences, we first
show how to use the SIP to obtain representation independence results of
interest to programmers. First, we consider two representations of matrices: as
functions out of a finite set of indices, which are well-suited to proofs,
and as vectors, which are well-suited to computations. Then, we revisit the
queue example from the introduction, and show how to achieve representation
independence through a combination of set quotients and the SIP.

\subsection{Matrices}
\label{subsec:matrices}

Matrices are a standard example in which dependent types can ensure the
well-definedness of operations such as multiplication. We can achieve this in
\Agda{} with length-indexed lists, or vectors:
\ExecuteMetaData[agda/latex/Matrix.tex]{vec}

We can define operations on vectors such as \func{map}, \func{replicate},
addition, and multiplication in the usual way. Unfortunately, this type is
ill-suited for proofs, because one must perform inductions that match the
structure of the functions one has written: it is quite difficult, for example,
to prove that matrix addition \func{addVecMatrix} is commutative. Ideally, one
would prefer to reason about matrix addition in a \emph{pointwise} fashion, in
which commutativity is trivial. We therefore consider a second implementation
better-suited to such reasoning:
\ExecuteMetaData[agda/latex/Matrix.tex]{fin}

If we now fix natural numbers \var{m} and \var{n} and assume \func{G} is an
additive abelian group, it is trivial to define matrix addition and prove it
commutative by function extensionality:
\ExecuteMetaData[agda/latex/Matrix.tex]{addmatrix}

\vspace{-\abovedisplayskip}
\ExecuteMetaData[agda/latex/Matrix.tex]{addmatrixcomm}

Similarly, we can easily define the zero matrix and negation of matrices, and
prove that these form an abelian group. In the absence of function
extensionality, these results are typically stated using setoids, which
complicates the proofs \citep{Wood19}. Alternatively, \texttt{MathComp}
\citep{MathComp} encodes matrices by the graphs of their indexing functions,
which makes proofs simpler without relying on setoids, but requires a lot of
additional theory that is trivialized by using functions directly.

Although \func{FinMatrix} is convenient for proofs, it is less natural for
computations than \func{VecMatrix}. Consider the following integer matrices:

\vspace{-\abovedisplayskip}
\begin{center}
\begin{minipage}[t]{1.0\linewidth}
\begin{minipage}[t]{0.5\linewidth}
  \ExecuteMetaData[agda/latex/Matrix.tex]{exmatrixM}
\end{minipage}%
\begin{minipage}[t]{0.5\linewidth}
  \ExecuteMetaData[agda/latex/Matrix.tex]{exmatrixN}
\end{minipage}
\end{minipage}
\end{center}

Because \func{if} does not compute on variables,
$\tyPath{\func{addFinMatrix}\,\func{M}\,\func{N}}{(λ\,\_\,\_ \to \anum{1})}$ does not hold by
\func{refl}. In a \func{VecMatrix} representation, however, the analogous
equation does follow from computation (i.e., by \func{refl}). Luckily, we can
define functions between the two representations and prove that they form an
equivalence using \func{funExt}, which we can transform into a path by
\func{ua}:
\ExecuteMetaData[agda/latex/Matrix.tex]{eqmatrix}

Using this equivalence, we can easily obtain the aforementioned equation in
\func{FinMatrix} by transferring the analogous equation in \func{VecMatrix}:
\ExecuteMetaData[agda/latex/Matrix.tex]{exmatrix3}

The \func{replaceGoal} lemma takes an equivalence $e$ and transforms a goal of
the form $\tyPath{x}{y}$ into $\tyPath{e^{-1}\,(e\,x)}{e^{-1}\,(e\,y)}$, then
discharges the former with a proof of the latter. Here, the latter proof is
\func{refl} as $\func{addFinMatrix}\,\func{M}\,\func{N}$ and $(λ\,\_\,\_ \to \anum{1})$ are
mapped to definitionally-equal \func{VecMatrix}es.

We could in fact \func{transport} the entire abelian group structure on
\func{FinMatrix} to \func{VecMatrix} along this equivalence, but the operations
obtained this way are very naive: for example, the transported addition on
\func{VecMatrix} converts both arguments to \func{FinMatrix}, adds them with
\func{addFinMatrix}, then converts them back to \func{VecMatrix}. What we
instead want is an abelian group structure on \func{VecMatrix} whose addition is
given by \func{addVecMatrix}.  We can achieve this using the SIP.

Using the tactics discussed in \cref{sec:sip}, we can automatically obtain a
univalent definition of structured equivalence for abelian groups. By
\cref{transportStrAxioms}, we must only show that the function underlying
\func{FinMatrix≃VecMatrix} sends \func{addFinMatrix} to \func{addVecMatrix}. As
\func{FinMatrix} is well-suited for proofs this poses no difficulties, and so we
obtain an induced \emph{and equal} abelian group structure on \func{VecMatrix}
whose addition operation is \func{addVecMatrix}. Using the equality between
abelian groups, we can now transport both proofs and programs between the two
representations.

\subsection{Queues}
\label{subsec:queues}

Thus far, our examples of the SIP have involved equivalent structured types;
now, we turn to our original example of two non-equivalent implementations of
\func{Queue}s. Fix a set $A$. Then a raw queue structure consists of an
\func{empty} queue and \func{enqueue}/\func{dequeue} functions:
\ExecuteMetaData[agda/latex/Queue.tex]{RawQueueStr}

We add some axioms specifying how these operations should behave: that \var{Q}
is a set, that \func{dequeue} of the \func{empty} queue is \con{nothing}, and
what happens if we \func{dequeue} after \func{enqueue}:
\ExecuteMetaData[agda/latex/Queue.tex]{QueueAxioms}


These axioms form a proposition because we have assumed \var{Q} is a set. As in
\cref{subsec:building}, we automate the construction of a univalent raw queue
structure, then add our axioms to obtain a univalent \func{QueueStructure}. The
\func{ListQueue} implementation from \cref{sec:intro} clearly satisfies our
queue axioms and therefore admits a \func{QueueStructure}. Then every
\func{BatchedQueue} (a pair $\func{List}\;A\;\func{×}\;\func{List}\;A$)
corresponds to a single \func{ListQueue} computed by \func{appendReverse}
(the function sending $(\var{xs}\;\con{,}\;\var{ys})$ to
$(\var{xs}\;\func{++}\;\func{reverse}\;\var{ys})$), but this function is not an
isomorphism: multiple \func{BatchedQueue}s are sent to the same
\func{ListQueue}. Worse yet, \func{BatchedQueue} \emph{is not even an instance
  of} \func{QueueStructure}, as it fails to satisfy \func{dequeueEnqueueAxiom}:
we have
$\tyPath{\var{dequeue}\;(\var{enqueue}\;c\;(\tmPair{\con{[}b\,\con{,}\,a\con{]}}{\tmNil}))}{\con{just}\;(\tmPair{(\tmPair{\tmNil}{\con{[}b\,\con{,}\,c\con{]}})}{a})}$
but
$\tyPath{\con{just}\;(\func{returnOrEnq}\;c\;(\var{dequeue}\;(\tmPair{\con{[}b\,\con{,}\,a\con{]}}{\tmNil})))}{\con{just}\;(\tmPair{(\tmPair{\con{[}c\con{]}}{\con{[}b\con{]}})}{a})}$.

We can resolve both of these problems by identifying any two
\func{BatchedQueue}s sent to the same list by \func{appendReverse}, which we can
do either with a set quotient, or equivalently with the HIT:
\ExecuteMetaData[agda/latex/Queue.tex]{2List}

We equip \func{BatchedQueueHIT} with a \func{QueueStructure} as follows. The
\var{empty} queue is \con{Q⟨ [] , [] ⟩}, and we can define \var{enqueue} and
\var{dequeue} functions that respect the \con{tilt} constructor using helper
functions similar to \func{fastcheck}. The \con{tilt} constructor allows us to
shift elements between the ends of the two lists back and forth, thus ensuring
that any two \func{BatchedQueueHIT}s corresponding to the same \func{ListQueue}
are identified. In particular, \func{appendReverse} can then be extended to an
equivalence $\func{BatchedQueueHIT}~\func{≃}~\func{List}\;A$ which is moreover
structure-preserving, giving an induced raw queue structure on
\func{BatchedQueueHIT}. We may now apply the SIP to transfer the axioms
satisfied by \func{ListQueue} to the quotiented \func{BatchedQueue} operations.




\section{Structured equivalences from structured relations}
\label{sec:relational}

\newcommand{\setQuotient}[2]{{#1} \mathbin{\func{/}} {#2}}
\newcommand{\idPropRel}[1]{\func{IdRel}\;{#1}}
\newcommand{\invPropRel}[1]{{#1}^{-1}}
\newcommand{\compPropRel}[2]{{#1} \cdot {#2}}
\newcommand{\graphRel}[1]{\func{graph}\;{#1}}
\newcommand{\equivGraphRel}[1]{\func{equivGraph}\;{#1}}
\newcommand{\eqRelL}[1]{{#1}^{\leftarrow}}
\newcommand{\eqRelR}[1]{{#1}^{\rightarrow}}

In the queue case study above, we improve a structured \emph{relation} between implementations (generated by
\func{appendReverse}) to a structured \emph{equivalence} by considering a quotient of one of the
implementations. In this section, we prove that a generalization of this technique is always applicable as
soon as we have what we call a \emph{quasi--equivalence relation (QER)}.

\subsection{Quasi\texorpdfstring{--}{–}equivalence relations}
\label{subsec:qers}

Because we will use set quotients, it is natural (and ultimately necessary) to restrict our attention to
proposition-valued relations, i.e., $R : X \to Y \to \func{Type}$ such that each $R\;x\;y$ is a proposition.

\begin{lemma}
  \label{lem:rel-operations}
  We have the following operations on prop-valued relations:
  \begin{itemize}
  \item For any prop-valued $R : X \to Y \to \func{Type}$, the
    inverse relation $\invPropRel{R}$ is also prop-valued.
  \item The identity prop-valued relation on $X$ is the truncated path type:
    $\idPropRel{X}\;x_0\;x_1\;\symb{=}\;\ptrunc{\tyPath{x_0}{x_1}}$.
  \item Given two prop-valued relations
    $R : X \to Y \to \func{Type}$ and $S : Y \to Z \to \func{Type}$, we define their prop-valued composite
    $\compPropRel{R}{S}$ using truncation:
    $(\compPropRel{R}{S})\;x\;z\;\symb{=}\;\ptrunc{\tySigma{y}{Y}{\tyProduct{R\;x\;y}{S\;y\;z}}}$.
  \item If $f : X \to Y$ where $Y$ is a set, then its graph,
    $(\graphRel{f})\;x\;y\;\symb{=}\;(\tyPath{f\;x}{y})$, is a prop-valued relation. In particular, if
    $e : \tyEquiv{X}{Y}$, we define $\equivGraphRel{e}\;\symb{=}\;\graphRel{(\func{equivFun}\;e)}$.
  \end{itemize}
\end{lemma}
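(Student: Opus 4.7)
The plan is to verify each of the four clauses directly from the definitions given in \cref{sec:cubicalagda}. For the inverse, $\invPropRel{R}\;y\;x$ unfolds to $R\;x\;y$, so propositionality is immediate from the hypothesis on $R$. For the identity relation $\idPropRel{X}\;x_0\;x_1\;\symb{=}\;\ptrunc{\tyPath{x_0}{x_1}}$ and the composite $(\compPropRel{R}{S})\;x\;z\;\symb{=}\;\ptrunc{\tySigma{y}{Y}{\tyProduct{R\;x\;y}{S\;y\;z}}}$, the relation values are of the form $\ptrunc{A}$, which is a proposition by construction: the \con{squash} constructor of the propositional truncation HIT equates any two elements of the truncation, so no further argument is needed.

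For the graph $(\graphRel{f})\;x\;y\;\symb{=}\;\tyPath{f\;x}{y}$, propositionality is precisely the assumption that $Y$ is a set: recall from \cref{ssec:univalence} that a set is defined as a type whose path types are propositions. The special case of $\equivGraphRel{e}\;\symb{=}\;\graphRel{(\func{equivFun}\;e)}$ inherits the property immediately, provided the target $Y$ of the equivalence is a set, which is the standing assumption in the applications we have in mind.

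There is no real obstacle here; the lemma amounts to a collection of definitional sanity checks ensuring that the four constructions land in the sub-universe of prop-valued relations. The only design decision worth flagging is why propositional truncation is used in the identity and composite: without $\ptrunc{-}$, the path type $\tyPath{x_0}{x_1}$ would only be a proposition when $X$ is a set, and the existentially-quantified $\Sigma$-type in the composite would retain witness data (namely the choice of intermediate $y$) and hence fail to be a proposition in general. Truncating forces both into propositions uniformly, which is exactly what will be needed in \cref{subsec:qers} to form set quotients along these relations.
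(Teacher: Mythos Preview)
Your proposal is correct and matches the paper's treatment: the paper states this lemma without proof, treating the four clauses as definitional observations, and your verification unfolds each case exactly as intended. The additional commentary on why truncation is needed in the identity and composite is accurate and helpful context, though not required for the lemma itself.
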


\begin{definition}
  A prop-valued relation $R : X \to Y \to \func{Type}$ is \emph{zigzag-complete} when for any $r_0 : R\;x\;y$,
  $r_1 : R\;x'\;y$, and $r_2 : R\;x'\;y'$, we have $R\;x\;y'$. This property can be visualized as follows:
  \[
    \begin{tikzpicture}
      \node (x)  at (0,0) { $x$ } ;
      \node (x') at (0,-1) { $x'$ } ;
      \node (y)  at (2,0) { $y$ } ;
      \node (y') at (2,-1) { $y'$ } ;
      \draw [-] (x) to (y) ;
      \draw [-] (x') to (y) ;
      \draw [-] (x') to (y') ;
      \draw [-,dashed] (x) to (y') ;
    \end{tikzpicture}
  \]
\end{definition}

Zigzag-complete relations, also known as \emph{quasi-PERs} or \emph{QPERs} \citep{KrishnaswamiDreyer13}, are a
heterogeneous analogue of partial equivalence relations (PERs). A zigzag-complete relation
$R : X \to Y \to \func{Type}$ expresses a partial correspondence between elements of $X$ and elements of $Y$;
intuitively, if $x$ corresponds to $y$, $x'$ corresponds to $y$, and $x'$ corresponds to $y'$, then $x$ should
also correspond to $y'$. Note that a QPER induces a PER on each of the two types:
$\compPropRel{R}{\invPropRel{R}}$ on $X$ and $\compPropRel{\invPropRel{R}}{R}$ on $Y$. We henceforth denote
these two relations $\eqRelL{R}$ and $\eqRelR{R}$ respectively. In fact, $R$ is zigzag-complete exactly
when $\eqRelL{R}$ and $\eqRelR{R}$ are PERs. In our setting, we require such a partial correspondence to be
\emph{total} by additionally asking for truncated choice functions in either direction.

\begin{definition}
  A zigzag-complete relation $R : X \to Y \to \func{Type}$ is a \emph{quasi--equivalence relation (QER)} when
  there are functions $(x:X)\to\ptrunc{\tySigma{y}{Y}{R\;x\;y}}$ and $(y:Y)\to\ptrunc{\tySigma{x}{X}{R\;x\;y}}$.
\end{definition}

As with QPERs, $R$ is a QER if and only if $\eqRelL{R}$ and $\eqRelR{R}$ are equivalence
relations. Moreover, the truncated choice functions induce an equivalence between the set quotients
$\setQuotient{X}{\eqRelL{R}}$ and $\setQuotient{Y}{\eqRelR{R}}$.

\begin{lemma}
  \label{lem:qer-to-equiv}
  Given a QER $R$, there is an equivalence
  $e : \tyEquiv{\setQuotient{X}{\eqRelL{R}}}{\setQuotient{Y}{\eqRelR{R}}}$ such that for every $x : X$ and
  $y : Y$, we have $\tyPath{\func{equivFun}\;e\;\tmTrunc{x}}{\tmTrunc{y}}$ if and only if $R\;x\;y$.
\end{lemma}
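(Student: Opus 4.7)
The plan is to construct the equivalence explicitly by defining maps in both directions via the truncated choice functions, then verify they are mutual inverses and satisfy the stated characterization. The main tools are the eliminators for set quotients and for propositional truncation (into a set, via weak constancy), plus zigzag-completeness and effectivity of set quotients of proposition-valued equivalence relations.

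First, I would construct the forward map $\bar f : \setQuotient{X}{\eqRelL{R}} \to \setQuotient{Y}{\eqRelR{R}}$. For each $x : X$, I want to define $f\;x \coloneqq \tmTrunc{y}$ for any $y$ with $R\;x\;y$. Concretely, the auxiliary map $g : \tySigma{y}{Y}{R\;x\;y} \to \setQuotient{Y}{\eqRelR{R}}$ sending $(y,r) \mapsto \tmTrunc{y}$ is weakly constant: two witnesses $R\;x\;y$ and $R\;x\;y'$ exhibit $(y,y') \in \eqRelR{R}$ directly through $x$, so $\tmTrunc{y} = \tmTrunc{y'}$. Since $\setQuotient{Y}{\eqRelR{R}}$ is a set, weak constancy lets $g$ factor through the propositional truncation, and applying the forward choice function yields $f : X \to \setQuotient{Y}{\eqRelR{R}}$. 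To descend $f$ to the quotient, I check that whenever $(x,x') \in \eqRelL{R}$, witnessed by some $y_0$ with $R\;x\;y_0$ and $R\;x'\;y_0$, chosen witnesses $R\;x\;y$ and $R\;x'\;y'$ give $R\;x\;y'$ by zigzag (applied to $R\;x\;y_0$, $R\;x'\;y_0$, $R\;x'\;y'$), and then $R\;x\;y$ with $R\;x\;y'$ yields $\tmTrunc{y} = \tmTrunc{y'}$ as before. This gives $\bar f$ via the universal property of set quotients. Symmetrically, I construct $\bar h$ in the reverse direction from the other choice function.

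Next, I would prove $\bar h \circ \bar f = \func{id}$ and $\bar f \circ \bar h = \func{id}$. Both are propositions (as equalities in sets), so I can eliminate into the quotients and into the truncations without issue: for a representative $\tmTrunc{x}$, any chosen $y$ with $R\;x\;y$ makes $\bar f\,\tmTrunc{x} = \tmTrunc{y}$, and then any chosen $x'$ with $R\;x'\;y$ makes $\bar h\,\tmTrunc{y} = \tmTrunc{x'}$. But $R\;x\;y$ and $R\;x'\;y$ exhibit $(x,x') \in \eqRelL{R}$, so $\tmTrunc{x'} = \tmTrunc{x}$. The other direction is symmetric. This yields an isomorphism, hence an equivalence $e$ by \func{isoToEquiv}.

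Finally, I would verify the characterization $\func{equivFun}\;e\;\tmTrunc{x} \equiv \tmTrunc{y}$ iff $R\;x\;y$. The ``if'' direction is immediate: given $R\;x\;y$, the pair $(y, R\;x\;y)$ is a valid choice for computing $\bar f\,\tmTrunc{x}$, so by weak constancy $\bar f\,\tmTrunc{x} = \tmTrunc{y}$. For the ``only if'' direction, suppose $\bar f\,\tmTrunc{x} = \tmTrunc{y}$; pick any witness $y_0$ with $R\;x\;y_0$, so $\tmTrunc{y_0} = \bar f\,\tmTrunc{x} = \tmTrunc{y}$ in $\setQuotient{Y}{\eqRelR{R}}$. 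By effectivity of the set quotient by the proposition-valued equivalence relation $\eqRelR{R}$, this gives $(y_0, y) \in \eqRelR{R}$, i.e., some $x'$ with $R\;x'\;y_0$ and $R\;x'\;y$. Then zigzag on $R\;x\;y_0$, $R\;x'\;y_0$, $R\;x'\;y$ yields $R\;x\;y$.

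The main obstacle is the bookkeeping around the propositional truncations: at several points (defining $f$, descending to the quotient, unfolding compositions) I need to eliminate truncations into sets by exhibiting weak constancy, and zigzag-completeness is precisely what makes the relevant auxiliary maps weakly constant in each case. Once the choice functions are threaded through carefully, effectivity closes the last gap in the characterization.
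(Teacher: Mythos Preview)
Your proposal is correct and follows essentially the same approach as the paper: factor a constant map through propositional truncation to get $X \to \setQuotient{Y}{\eqRelR{R}}$, descend to the quotient using zigzag-completeness, check the inverse conditions on representatives, and establish the characterization via effectivity of set quotients. You have spelled out in more detail what the paper summarizes as ``straightforward,'' but the structure and key ingredients are the same.
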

\begin{proof}
  Consider first the forward map, $\setQuotient{X}{\eqRelL{R}} \to \setQuotient{Y}{\eqRelR{R}}$. Observe that
  for every $x : X$, we have \\ a map
  $\con{[\_]}\;\func{∘}\;\field{fst} : (\tySigma{y}{Y}{R\;x\;y}) \to \setQuotient{Y}{\eqRelR{R}}$. Moreover,
  this map is \emph{constant}: given $y\;y' : A$ with \\ $R\;x\;y$ and $R\;x\;y'$, we have
  $\tyPath{\tmTrunc{y}}{\tmTrunc{y'}}$ by definition of the quotient. It therefore factors through the
  propositional truncation \citep{KrausEscardo+17}, giving
  $f_x : \ptrunc{\tySigma{y}{Y}{R\;x\;y}} \to \setQuotient{Y}{\eqRelR{R}}$ such that
  $\tyPath{f_x\;\func{∘}\;\con{[\_]}}{\con{[\_]}\;\func{∘}\;\field{fst}}$. We precompose with the provided
  choice function $(x:X) \to \ptrunc{\tySigma{y}{Y}{R\;x\;y}}$ to get a map $X \to \setQuotient{Y}{\eqRelR{R}}$.
  To show that this map factors through $\setQuotient{X}{\eqRelL{R}}$, it is enough to check that for every
  $x\;x' : X$ with $\eqRelL{R}\;x\;x'$ and $y\;y' : A$ with $R\;x\;y$ and $R\;x'\;y'$, we have
  $\tyPath{\tmTrunc{y}}{\tmTrunc{y'}}$. This follows by composing the relational witnesses and applying
  $\con{eq/}$, using that $R$ is zigzag-complete. Thus we have a map
  $\setQuotient{X}{\eqRelL{R}} \to \setQuotient{Y}{\eqRelR{R}}$; the candidate inverse is constructed in the
  same way.

  The two inverse conditions are propositions, so it suffices to check that they hold on inputs of the form
  $\tmTrunc{x}$ and $\tmTrunc{y}$ respectively, which is a straightforward consequence of the definitions.
  The bi-implication between $\tyPath{\func{equivFun}\;e\;\tmTrunc{x}}{\tmTrunc{y}}$ and $R\;x\;y$ is also a
  straightforward calculation; the forward direction uses effectivity of set quotients by equivalence
  relations \citep{Voevodsky15unimath}.
\end{proof}

\begin{example}[Finite multisets]\label{ex:fmset}
We say that $X$ is a type of \emph{finite multisets (bags) over $A$} if it has an empty multiset $X$, and
functions for insertion $A\to X\to X$, union $X\to X\to X$, and multiplicity $A\to X\to \tyNat{}$. We can
include axioms as well, but for the moment we focus on the raw structure; for simplicity, we also assume that
$A$ has decidable equality. We consider two implementations: lists ($\data{List}\;A$) and \emph{association
lists} ($\func{AssocList}\;A\;\symb{=}\;\data{List}\;(\tyProduct{A}{\tyNat})$), where in the latter case each
element is tagged with a multiplicity. Both implementations take the empty list \con{[]} as the empty multiset; the
remaining operations are defined as follows:

\vspace{-\abovedisplayskip}
\begin{center}
\begin{minipage}[t]{1.0\linewidth}
\begin{minipage}[t]{0.41\linewidth}
  \ExecuteMetaData[agda/latex/Section5.tex]{AddIfEq}

  \vspace{-\abovedisplayskip}
  \ExecuteMetaData[agda/latex/Section5.tex]{ListStructure}
\end{minipage}%
\begin{minipage}[t]{0.59\linewidth}
  \ExecuteMetaData[agda/latex/Section5.tex]{AListStructure}
\end{minipage}
\end{minipage}
\end{center}

Note that \emph{neither} implementation satisfies many of the ``extensional'' laws we expect of multisets,
such as $\tyPath{\func{insert}\;a\;(\func{insert}\;a'\;xs)}{\func{insert}\;a'\;(\func{insert}\;a\;xs)}$. We
now define a QER between $\data{List}\;A$ and $\func{AssocList}\;A$, identifying multisets that assign the
same multiplicity to each element of $A$.
\ExecuteMetaData[agda/latex/Section5.tex]{Relation}

This relation is prop-valued because $\tyNat$ is a set, and is zigzag-complete by transitivity and symmetry of
\func{≡}. We define functions in both directions using the implementations' $\func{insert}$ functions.

\vspace{-\abovedisplayskip}
\begin{center}
\begin{minipage}[t]{1.0\linewidth}
\begin{minipage}[t]{0.4\linewidth}
  \ExecuteMetaData[agda/latex/Section5.tex]{Fwd}
\end{minipage}%
\begin{minipage}[t]{0.6\linewidth}
  \ExecuteMetaData[agda/latex/Section5.tex]{Bwd}
\end{minipage}
\end{minipage}
\end{center}

Finally, proving $\forall\;xs \to \func{R}\;xs\;(\func{φ}\;xs)$ and $\forall\;ys \to
\func{R}\;(\func{ψ}\;ys)\;ys$ takes a bit more work, but the intuition is that these functions preserve the
$\func{count}$ of each element of $A$. Note that the two derived equivalence relations $\eqRelL{\func{R}}$ and
$\eqRelR{\func{R}}$ also wind up identifying multisets on either side precisely when they assign the same
multiplicity to each element of $A$. By \cref{lem:qer-to-equiv}, we have an equivalence between
$\func{List}\;A$ and $\func{AssocList}\;A$ after quotienting by these relations on either side.
\end{example}

\subsection{Structured relations}
\label{subsec:structured-relations}

We now generalize to \emph{structured} relations and equivalences, defining a condition on notions of
structured relation such that any structured QER induces a structured equivalence between quotients. First, as
with structured equivalences, a candidate \emph{notion of structured relation} for a structure $S$ assigns,
for each relation $R$ on types $X$ and $Y$ and structures $s : S\; X$, $t : S\; Y$, a type of witnesses that
the relation is structured:
\ExecuteMetaData[agda/latex/Section5.tex]{StrRel}

(For simplicity, we do not require the input relation to be prop-valued, but our correctness conditions will
only constrain behavior on prop-valued relations.) We now identify a correctness condition on notions of
structured relations.

\begin{definition}
  A candidate $\rho : \func{StrRel}\;S$ is \emph{suitable} when the following hold:
  \begin{enumerate}
  \item \emph{Set- and prop-preservation:} If $X$ is a set, then $S\;X$ is a set. If $R$ is a
    prop-valued-relation, then $\rho\;R$ is a prop-valued relation.
  \item \emph{Symmetry:} For any prop-valued relation $R$, if $(\rho\;R)\;s\;t$, then $(\rho\;(R^{-1}))\;t\;s$.
  \item \emph{Transitivity:} For any prop-valued relations $R,R'$, if $(\rho\;R)\;s\;t$ and
    $(\rho\;R')\;t\;u$, then $(\rho\;\compPropRel{R}{R'})\;s\;u$.
  \item \emph{Descent to quotients:} If $R : X \to X \to \func{Type}$ is a prop-valued equivalence relation
    and $(\rho\;R)\;s\;s$ for some $s : S\;X$, then there is a \emph{unique}
    $\overline{s} : S\;(\setQuotient{X}{R})$ such that we have some
    $r : \rho\;(\graphRel{\con{[\_]}})\;s\;\overline{s}$; that is, the type
    $\tySigma{\overline{s}}{S\;(\setQuotient{X}{R})}{(\rho\;(\graphRel{\con{[\_]}})\;s\;\overline{s})}$ is
    contractible.
  \end{enumerate}
\end{definition}

Note that the condition of being suitable is a proposition. First, we check that these conditions are
sufficient for our original motivation: obtaining structured equivalences from structured QERs.

\begin{theorem}
  \label{thm:qer-descends}
  Let $\rho : \func{StrRel}\;S$ be a suitable notion of structured relation, and let structured types
  $(\tmPair{X}{s}) : \func{TypeWithStr}\;S$ and $(\tmPair{Y}{t}) : \func{TypeWithStr}\;S$ be given. For any
  QER $R : X \to Y \to \func{Type}$ structured by some $r : (\rho\;R)\;s\;t$, the following exist:
  \begin{enumerate}
  \item a structure $\overline{s} : S\; (\setQuotient{X}{\eqRelL{R}})$ with
    $\rho\;(\graphRel{\con{[\_]}})\;s\;\overline{s}$,
  \item a structure $\overline{t} : S\; (\setQuotient{Y}{\eqRelR{R}})$ with
    $\rho\;(\graphRel{\con{[\_]}})\;t\;\overline{t}$,
  \item an element of $\rho\;(\equivGraphRel{e})\;\overline{s}\;\overline{t}$, where
    $e : \tyEquiv{\setQuotient{X}{\eqRelL{R}}}{\setQuotient{Y}{\eqRelR{R}}}$ is the equivalence obtained by
    applying \cref{lem:qer-to-equiv}.
  \end{enumerate}
  In other words, we have the dotted lines in the following picture, where the inner lines indicate relations
  and each outer line indicates an element of $\rho$ over the inner relation.
  \[
    \begin{tikzpicture}
      \pgfmathsetmacro{\inner}{1.6}
      \pgfmathsetmacro{\outer}{0.7}

      \node (x) at (0,\inner) {$X$} ;
      \node (xrl) at (0,0) {$\setQuotient{X}{\eqRelL{R}}$} ;
      \node (y) at (\inner,\inner) {$Y$} ;
      \node (yrr) at (\inner,0) {$\setQuotient{Y}{\eqRelR{R}}$} ;
      \draw [-] (x) to node [above] {$R$} (y) ;
      \draw [->] (x) to node [right] {\con{[\_]}} (xrl) ;
      \draw [->] (y) to node [left] {\con{[\_]}} (yrr) ;
      \draw [dashed,->] (xrl) to node [above] {$e$} node [below] {\func{≃}} (yrr) ;

      \node (s) at ($(0,\inner) + (-\outer,\outer)$) {$s$} ;
      \node (os) at (-\outer,-\outer) {$\overline{s}$} ;
      \node (t) at ($(\inner,\inner) + (\outer,\outer)$) {$t$} ;
      \node (ot) at ($(\inner,0) + (\outer,-\outer)$) {$\overline{t}$} ;
      \draw [-] (s) to node [above] {$r : (\rho\;R)\;s\;t$} (t) ;
      \draw [dashed,-] (s) to node [left] {$\exists!$} (os) ;
      \draw [dashed,-] (t) to node [right] {$\exists!$} (ot) ;
      \draw [dashed,-] (os) to (ot);
    \end{tikzpicture}
  \]
\end{theorem}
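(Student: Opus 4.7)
The plan is to build the three required pieces in sequence, using the four clauses of suitability plus Lemma~\ref{lem:qer-to-equiv}. The guiding idea is that we already have $r : (\rho\;R)\;s\;t$, and suitability lets us compose, invert, and descend relations; we just have to manoeuvre until the relation in the position of $R$ becomes the graph of the equivalence $e$.

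First, I would construct $\overline{s}$. Since $R$ is a QER, $\eqRelL{R} = \compPropRel{R}{\invPropRel{R}}$ is a prop-valued equivalence relation on $X$. From $r : (\rho\;R)\;s\;t$ I get $(\rho\;\invPropRel{R})\;t\;s$ by symmetry, and then by transitivity $(\rho\;\eqRelL{R})\;s\;s$. Descent to quotients now produces a unique $\overline{s} : S\;(\setQuotient{X}{\eqRelL{R}})$ together with $r^{\leftarrow} : \rho\;(\graphRel{\con{[\_]}})\;s\;\overline{s}$, delivering part (1). Part (2) is completely symmetric: use $(\rho\;R)\;s\;t$, invert via symmetry to get $(\rho\;\invPropRel{R})\;t\;s$, compose with the original $r$ to get $(\rho\;\eqRelR{R})\;t\;t$, and descend to obtain $\overline{t} : S\;(\setQuotient{Y}{\eqRelR{R}})$ with some $r^{\rightarrow} : \rho\;(\graphRel{\con{[\_]}})\;t\;\overline{t}$.

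For part (3), I apply symmetry to $r^{\leftarrow}$, then chain it with $r$ and $r^{\rightarrow}$ by two uses of transitivity to obtain an element of
\[
  \rho\;\bigl(\compPropRel{\compPropRel{\invPropRel{(\graphRel{\con{[\_]}})}}{R}}{\graphRel{\con{[\_]}}}\bigr)\;\overline{s}\;\overline{t}.
\]
So what remains is to identify this composite prop-valued relation on $\setQuotient{X}{\eqRelL{R}}$ and $\setQuotient{Y}{\eqRelR{R}}$ with $\equivGraphRel{e}$. For any representatives $x : X$, $y : Y$, unfolding the composite says there exist $x' : X$, $y' : Y$ with $\tmTrunc{x'} = \tmTrunc{x}$, $R\;x'\;y'$, and $\tmTrunc{y'} = \tmTrunc{y}$; by effectivity of quotients by equivalence relations and zigzag-completeness, this is equivalent to $R\;x\;y$, which by Lemma~\ref{lem:qer-to-equiv} is equivalent to $\tyPath{\func{equivFun}\;e\;\tmTrunc{x}}{\tmTrunc{y}}$, i.e.\ $\equivGraphRel{e}\;\tmTrunc{x}\;\tmTrunc{y}$. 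Both sides are propositions (using prop-preservation and that quotients of sets are sets), so propositional extensionality and function extensionality upgrade this pointwise logical equivalence to an equality of relations. Transporting the composite witness along this equality yields the required $\rho\;(\equivGraphRel{e})\;\overline{s}\;\overline{t}$.

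The main obstacle is this last identification. Care is needed because the composite relation is defined with propositional truncations (one per $\cdot$) while $\equivGraphRel{e}$ is the plain graph of a function on quotients; reconciling the two rests critically on effectivity of set quotients by \emph{equivalence} relations (hence the earlier step of closing $R$ to $\eqRelL{R}$ and $\eqRelR{R}$ before descending), together with the biconditional in Lemma~\ref{lem:qer-to-equiv}. Once the equality of relations is in hand, the rest is routine cubical bookkeeping.
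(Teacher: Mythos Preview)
Your proposal is correct and follows essentially the same route as the paper: obtain $(\rho\;\eqRelL{R})\;s\;s$ and $(\rho\;\eqRelR{R})\;t\;t$ by symmetry and transitivity, descend to the quotients, compose the three witnesses to land in $\rho$ of the triple composite $\compPropRel{\invPropRel{(\graphRel{\con{[\_]}})}}{\compPropRel{R}{\graphRel{\con{[\_]}}}}$, and finish by identifying that composite with $\equivGraphRel{e}$. You supply more detail on the last identification (effectivity, zigzag-completeness, the bi-implication from Lemma~\ref{lem:qer-to-equiv}) than the paper does, but the argument is the same.
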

\begin{proof}
  By symmetry and transitivity applied with $r$, we have an element of $(\rho\;\eqRelL{R})\;s\;s$. By descent to
  quotients, we thus obtain $\overline{s} : S\; (\setQuotient{X}{\eqRelL{R}})$ as above. We obtain
  $\overline{t}$ analogously. From $(\rho\;R)\;s\;t$, $\rho\;(\graphRel{\con{[\_]}})\;s\;\overline{s}$,
  and $\rho\;(\graphRel{\con{[\_]}})\;t\;\overline{t}$, we get a proof of
  $\rho\;(\compPropRel{\invPropRel{\graphRel{\con{[\_]}}}}{\compPropRel{R}{\graphRel{\con{[\_]}}}})\;\overline{s}\;\overline{t}$
  by symmetry and transitivity. We arrive at the final condition by checking that the composite relation
  $\compPropRel{\invPropRel{(\graphRel{\con{[\_]}})}}{\compPropRel{R}{\graphRel{\con{[\_]}}}}$
  is equal to $\equivGraphRel{e}$.
\end{proof}

If the restriction of $\rho$ to equivalences is moreover a \emph{univalent} notion of structured equivalence,
we obtain a \emph{path} between $(\tmPair{\setQuotient{X}{\eqRelL{R}}}{\overline{s}})$ and
$(\tmPair{\setQuotient{Y}{\eqRelR{R}}}{\overline{t}})$ as a corollary.

\begin{definition}
  A notion of structured relation $\rho : \func{StrRel}\;S$ is \emph{univalent} when it is suitable and its
  restriction to equivalences, $(\lambda\;(\tmPair{X}{s})\;(\tmPair{Y}{t})\;e \to
  \rho\;(\equivGraphRel{e})\;s\;t) : \func{StrEquiv}\;S$, is univalent.
\end{definition}

As in \cref{sec:sip}, we can show that the collection of suitable and univalent relational structures is
closed under various type formers with reasonable definitions of structured relation. (Notably, the constant
structure $\lambda\;\_ \to A$ is only suitable when $A$ is a set.)

\vspace{\abovedisplayskip}
\begin{tabular}{>{\itshape}l<{} >{$}r<{$} >{$}l<{$}}
  Positive structures & P\;X, Q\;X \coloneqq & X \mid A \mid \tyProduct{P\;X}{Q\;X} \mid \func{Maybe}\;(P\;X)
  \\
  Structures & S\;X,T\;X \coloneqq & P\;X \mid \tyProduct{S\;X}{T\;X} \mid P\;X \to S\;X \mid \func{Maybe}\;(S\;X)
\end{tabular}
\vspace{\belowdisplayskip}

One major departure from the situation in \cref{sec:sip} is that the suitable relation structures are
\emph{not} closed under function types in general, but only in restricted cases. We can see the issue by
considering descent to quotients for the candidate structure $\symb{λ}\; X \to (X \to X) \to X$. Given an
instance of this structure $f : (X \to X) \to X$ and a QER $R : X \to X \to \func{Type}$, we have no way of
obtaining an induced structure $\overline{f} : (\setQuotient{X}{R} \to \setQuotient{X}{R}) \to
\setQuotient{X}{R}$: at some point, we would need to extract maps $X \to X$ from maps $\setQuotient{X}{R} \to
\setQuotient{X}{R}$.  Nevertheless, the function structure $S \to T$ \emph{is} suitable when $S$ belongs to
the restricted class of \emph{positive} relation structures, which we define in two steps as follows.

\begin{definition}
  A candidate $\rho : \func{StrRel}\;S$ \emph{acts on functions} when the following hold:
  \begin{enumerate}
  \item For any $f : X_0 \to X_1$, we have an action $S\; f : S\; X_0 \to S\; X_1$ with
    $\tyPath{S\; (\symb{λ}\,x → x)}{(\symb{λ}\,s → s)}$.
  \item For any $f : X_0 \to X_1$, $g : Y_0 \to Y_1$, relations $R_0 : X_0 \to Y_0 \to \func{Type}$ and
    $R_1 : X_1 \to Y_1 \to \func{Type}$, and
    $\alpha : (x : X_0)\;(y : Y_0) \to R_0\;x\;y \to R_1\;(f\;x)\;(g\;y)$, we have a map
    $\rho\;\alpha : (s : S\;X_0)\;(t : S\;Y_0) \to (\rho\;R_0)\;s\;t \to (\rho\;R_1)\;(S\;f\;s)\;(S\;g\;t)$.
  \end{enumerate}
\end{definition}

\begin{lemma}
  \label{lem:suitable-quotient-map}
  If $\rho : \func{StrRel}\;S$ is suitable and acts on functions, then any prop-valued equivalence relation
  $R : X \to X \to \func{Type}$ induces a map $\setQuotient{(S\;X)}{(\rho\;R)} \to S\;(\setQuotient{X}{R})$.
\end{lemma}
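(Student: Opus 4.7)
The plan is to apply the universal property of set quotients, \func{setquotuniv}, which reduces constructing the desired map to three tasks: verifying that the codomain $S\;(\setQuotient{X}{R})$ is a set, exhibiting a function $h : S\;X \to S\;(\setQuotient{X}{R})$, and proving that $h$ sends $(\rho\;R)$-related inputs to equal outputs. The first is immediate from \con{squash/} together with the set-preservation clause of suitability (applied to the set $\setQuotient{X}{R}$). For the second, the fact that $\rho$ acts on functions provides a canonical choice, $h \coloneqq S\;\con{[\_]}$, where $\con{[\_]} : X \to \setQuotient{X}{R}$ is the quotient map.

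The substance of the proof lies in the third task. Given $(\rho\;R)\;s\;t$, I would first use the symmetry and transitivity clauses (together with the observation that $\invPropRel{R}$ and $\compPropRel{R}{\invPropRel{R}}$ coincide with $R$ as prop-valued relations, because $R$ is assumed to be a prop-valued equivalence relation) to extract $(\rho\;R)\;s\;s$. Next, I would instantiate the functorial action of $\rho$ at $f \coloneqq \symb{λ}\;x \to x$, $g \coloneqq \con{[\_]}$, $R_0 \coloneqq R$, $R_1 \coloneqq \graphRel{\con{[\_]}}$, and $\alpha \coloneqq \con{eq/}$; using the identity law for $S$ this yields a map $(\rho\;R)\;u\;v \to (\rho\;\graphRel{\con{[\_]}})\;u\;(S\;\con{[\_]}\;v)$ for any $u, v : S\;X$. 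Applying it to $(\rho\;R)\;s\;t$ gives $(\rho\;\graphRel{\con{[\_]}})\;s\;(S\;\con{[\_]}\;t)$, and applying it to $(\rho\;R)\;s\;s$ gives $(\rho\;\graphRel{\con{[\_]}})\;s\;(S\;\con{[\_]}\;s)$—two witnesses of the same graph-relation at $s$.

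To conclude, I would invoke the descent-to-quotients clause applied to $s$ and $R$ (using $(\rho\;R)\;s\;s$), which provides a \emph{unique} $\overline{s} : S\;(\setQuotient{X}{R})$ with $(\rho\;\graphRel{\con{[\_]}})\;s\;\overline{s}$; by uniqueness, both $S\;\con{[\_]}\;s$ and $S\;\con{[\_]}\;t$ must coincide with $\overline{s}$, yielding the desired path and completing the input required by \func{setquotuniv}. The main technical obstacle I anticipate is the bookkeeping for propositional equalities between relations, namely $\invPropRel{R} = R$ and $\compPropRel{R}{\invPropRel{R}} = R$: these hold because the relations in question are prop-valued and logically equivalent under the equivalence-relation hypothesis, but lining up the existential witnesses hidden beneath the propositional truncation in the definition of relational composition requires some care.
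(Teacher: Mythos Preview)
Your proposal is correct and follows essentially the same route as the paper: take $S\;\con{[\_]}$ as the underlying map, use set-preservation for the codomain, and show it respects $\rho\;R$ by producing two witnesses of $\rho\;(\graphRel{\con{[\_]}})\;s\;(-)$ and appealing to the uniqueness clause of descent to quotients. The only cosmetic difference is that the paper does not bother to collapse $\compPropRel{R}{\invPropRel{R}}$ to $R$; it works directly with $(\rho\;\eqRelL{R})\;s\;s$ and chooses $\alpha$ to go from $\eqRelL{R}$ (rather than $R$) into $\graphRel{\con{[\_]}}$, which sidesteps the prop-ext bookkeeping you flagged as the main technical obstacle.
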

\begin{proof}
  We have $S\;\con{[\_]} : S\;X \to S\;(\setQuotient{X}{R})$. The codomain is a set by set-preservation, so we
  just need to show that $\tyPath{S\;\con{[\_]}\;x_0}{S\;\con{[\_]}\;x_1}$ whenever $(\rho\;R)\;x_0\;x_1$. To
  show this we use the uniqueness condition for descent to quotients: it suffices to show that both
  $S\;\con{[\_]}\;x_0$ and $S\;\con{[\_]}\;x_1$ satisfy $\rho\;(\graphRel{\con{[\_]}})\;x_0$. For the first,
  we start by deriving $(\rho\;\eqRelL{R})\;x_0\;x_0$ from $(\rho\;R)\;x_0\;x_1$ using symmetry and transitivity.
  We have maps $(\symb{λ}\,x → x) : X \to X$ and $\con{[\_]} : X \to \setQuotient{X}{R}$, and know that
  $\eqRelL{R}\;y\;y'$ implies $\graphRel{\con{[\_]}}\;y\;\con{[}\;y'\;\con{]}$ for every $y,y' : X$. The
  action of $\rho$ on functions thus tells us that
  $\rho\;(\graphRel{\con{[\_]}})\;(S\;(\symb{λ}\,x → x)\;x_0)\;(S\;\con{[\_]}\;x_0)$ and therefore
  $\rho\;(\graphRel{\con{[\_]}})\;x_0\;(S\;\con{[\_]}\;x_0)$ holds. The proof of
  $\rho\;(\graphRel{\con{[\_]}})\;x_0\;(S\;\con{[\_]}\;x_1)$ proceeds similarly.
\end{proof}

\begin{definition}
  \label{def:positive}
  Let $\rho : \func{StrRel}\;S$ be a suitable notion of relational structure that acts on functions. We say
  $\rho$ is \emph{positive} when the following hold.
  \begin{enumerate}
  \item \emph{Reflexivity}: For any $X : \func{Type}$ and $s : S\;X$, we have $(\rho\;(\func{IdRel}\;X))\;s\;s$.
  \item \emph{Reverse transitivity}: For any prop-valued relations $R : X \to Y \to \func{Type}$ and
    $R' : Y \to Z \to \func{Type}$ and terms $s : S\; X$ and $t : S\; Z$, the map
    $(\compPropRel{\rho\;R}{\rho\;R'})\;s\;t \to (\rho\;(\compPropRel{R}{R'}))\;s\;t$ supplied by transitivity
    is an equivalence.
  \item \emph{Quotients}: For any prop-valued equivalence relation $R$, the map
    $\setQuotient{(S\;X)}{(\rho\;R)} \to S\;(\setQuotient{X}{R})$ defined in \cref{lem:suitable-quotient-map}
    is an equivalence.
  \end{enumerate}
\end{definition}

These conditions are designed to validate the following result.

\begin{theorem}
  If $\rho_1 : \func{StrRel}\;S_1$ is a positive notion of relational structure and
  $\rho_2 : \func{StrRel}\;S_2$ is a suitable notion of relational structure, then the following is a suitable
  notion of relational structure.
  \ExecuteMetaData[agda/latex/Section5.tex]{FunctionRelStr}
\end{theorem}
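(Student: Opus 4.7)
My plan is to verify each of the four conditions of suitability for the logical-relation candidate, under which two functions $f : S_1\,X \to S_2\,X$ and $g : S_1\,Y \to S_2\,Y$ are declared related over a prop-valued $R : X \to Y \to \Type$ exactly when, for all $s : S_1\,X$ and $t : S_1\,Y$ with $(\rho_1\,R)\,s\,t$, we have $(\rho_2\,R)\,(f\,s)\,(g\,t)$. Set- and prop-preservation follow from the corresponding conditions on $\rho_2$: a function type into a set is a set, and a dependent product of propositions is a proposition. Symmetry just symmetrizes in $\rho_1$ on the input and in $\rho_2$ on the output. For transitivity, the crucial ingredient is \emph{reverse transitivity} of $\rho_1$, part of the positivity assumption: given an input witness $(\rho_1\,(\compPropRel{R}{R'}))\,s\,u$, I split it (up to truncation) as $(\rho_1\,R)\,s\,t$ and $(\rho_1\,R')\,t\,u$, feed these through the two function-level hypotheses, and finally apply transitivity of $\rho_2$.

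The main obstacle is descent to quotients, where positivity of $\rho_1$ plays an essential role. Fix a prop-valued equivalence relation $R$ on $X$ and $f : S_1\,X \to S_2\,X$ satisfying the logical relation over $R$ against itself; I need a unique $\overline{f} : S_1\,(X/R) \to S_2\,(X/R)$ with $\rho\,(\graphRel{\con{[\_]}})\,f\,\overline{f}$. I first observe that $(\rho_1\,R)\,s\,s$ holds for every $s : S_1\,X$: combine the reflexivity clause of positivity, giving $(\rho_1\,(\idPropRel{X}))\,s\,s$, with $\rho_1$'s action on the identity function along the pointwise inclusion $\idPropRel{X} \to R$ induced by reflexivity of $R$. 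Hence $(\rho_2\,R)\,(f\,s)\,(f\,s)$ by hypothesis, and descent for $\rho_2$ yields a unique $g(s) : S_2\,(X/R)$ with $\rho_2\,(\graphRel{\con{[\_]}})\,(f\,s)\,g(s)$. To show $g$ factors through the quotient $(S_1\,X)/(\rho_1\,R)$, suppose $(\rho_1\,R)\,s\,s'$; then $(\rho_2\,R)\,(f\,s)\,(f\,s')$ by hypothesis, which chained with $\rho_2\,(\graphRel{\con{[\_]}})\,(f\,s')\,g(s')$ gives $\rho_2\,(\compPropRel{R}{\graphRel{\con{[\_]}}})\,(f\,s)\,g(s')$. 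Since $\compPropRel{R}{\graphRel{\con{[\_]}}}$ and $\graphRel{\con{[\_]}}$ coincide as prop-valued relations (both encode equality in the quotient), descent uniqueness for $\rho_2$ forces $g(s) = g(s')$. The quotient equivalence $q_1 : (S_1\,X)/(\rho_1\,R) \simeq S_1\,(X/R)$ from positivity then lets me set $\overline{f} = g \circ q_1^{-1}$.

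It remains to check compatibility and uniqueness of $\overline{f}$, both following from descent uniqueness for the two $\rho$'s. Descent for $\rho_1$ applied to $R$ and $s$ singles out $q_1\,\tmTrunc{s}$ as the unique element of $S_1\,(X/R)$ related to $s$ by $\rho_1\,(\graphRel{\con{[\_]}})$ (using the same reflexivity-plus-action argument to exhibit a witness), so any $\overline{s}$ with $(\rho_1\,(\graphRel{\con{[\_]}}))\,s\,\overline{s}$ must equal $q_1\,\tmTrunc{s}$; hence $\overline{f}\,\overline{s} = g(s)$ satisfies the required $\rho_2\,(\graphRel{\con{[\_]}})\,(f\,s)\,(\overline{f}\,\overline{s})$ by construction of $g$. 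For uniqueness, any rival $\overline{f}'$ must agree with $\overline{f}$ pointwise on every $\overline{s} = q_1\,\tmTrunc{s}$ by descent uniqueness for $\rho_2$ applied to $f\,s$; since $q_1$ is an equivalence such $\overline{s}$ exhaust $S_1\,(X/R)$, and function extensionality closes out the argument. The most delicate steps are the positivity-driven derivation that $\rho_1\,R$ is reflexive, the relation identity $\compPropRel{R}{\graphRel{\con{[\_]}}} = \graphRel{\con{[\_]}}$, and using descent uniqueness for $\rho_2$ as the common engine for both well-definedness and uniqueness of $\overline{f}$.
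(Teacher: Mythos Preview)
Your proposal is correct and follows essentially the same route as the paper. Both arguments handle transitivity via reverse transitivity of $\rho_1$, and both construct the quotient map $S_1(X/R) \to S_2(X/R)$ by first using reflexivity of $\rho_1$ (pushed along the action on functions to obtain $(\rho_1\,R)\,s\,s$) together with descent for $\rho_2$ to get a map $S_1\,X \to S_2(X/R)$, then using uniqueness in $\rho_2$'s descent to show it respects $\rho_1\,R$, and finally precomposing with the inverse of the quotient equivalence from positivity of $\rho_1$. Your write-up in fact supplies more detail than the paper on the compatibility witness and on uniqueness of $\overline{f}$, which the paper leaves implicit as ``technical''; your identification of the relation identity $\compPropRel{R}{\graphRel{\con{[\_]}}} = \graphRel{\con{[\_]}}$ and the repeated use of $\rho_2$'s descent uniqueness are exactly the right tools there.
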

\begin{proof}
  The (formalized) proof is rather technical; here, we give just enough detail to point out where the
  conditions in \cref{def:positive} are used.
  Reflexivity and the quotient condition are both used in the proof that \func{FunctionRelStr} induces
  quotient structures. Given an equivalence relation $R$, we must be able to turn maps $f : S_1\;X \to S_2\;X$
  with $\func{FunctionRelStr}\;R\;f\;f$ into maps $S_1\;(\setQuotient{X}{R}) \to S_2\;(\setQuotient{X}{R})$.
  First, we use reflexivity to construct a map $S_1\;X \to S_2\;(\setQuotient{X}{R})$. Given $s : S_1\;X$, we
  have some $r : (\rho_1\;(\func{IdRel}\;X))\;s\;s$ by reflexivity. By the action of $\rho_1$ on functions and
  reflexivity of $R$, this implies $r : (\rho_1\;R)\;s\;s$. Then by definition of
  $\func{FunctionRelStr}\;R\;f\;f$, we have $(\rho_2\;R)\;(f\;s)\;(f\;s)$. By descent to quotients for $\rho_2$,
  we thereby obtain an element of $S_2\;(\setQuotient{X}{R})$.

  The uniqueness condition in descent to quotients allows us to prove that this map
  $S_1\;X \to S_2\;(\setQuotient{X}{R})$ respects $\rho_1\;R$, and thus induces a map
  $\setQuotient{S_1\;X}{\rho_1\;R} \to S_2\;(\setQuotient{X}{R})$. We then apply the quotient condition to
  derive a map $S_1\;(\setQuotient{X}{R}) \to S_2\;(\setQuotient{X}{R})$. Finally, reverse transitivity in the
  domain is naturally required when we prove that $\func{FunctionRelStr}$ is transitive.
\end{proof}

We note that the definitions of suitability and positivity are somewhat negotiable, and are chosen to make
these proofs work. For example, one may instead require suitable (not only positive) structures to be
reflexive. In that case we are still able to prove the closure conditions, but we must also impose a
\emph{reverse} reflexivity condition for positive structures. Our positivity restriction accommodates the
structures currently defined in the \CubicalAgda{} library; the most notable omission (besides $\symb{λ}\; X
\to (X \to X) \to X$, as noted earlier) is $\symb{λ}\; X \to (\tyNat \to X) \to X$.

\begin{example}
To apply \cref{thm:qer-descends} to the multiset implementations from \cref{ex:fmset}, we must show that the QER $\func{R}$ is
structured as a relation between multiset implementations. We generate the notion of structured relation using a
relational equivalent of the tactics described in \cref{sec:sip}. (In this case, an annotation is required on
the constant $\tyNat$ verifying that it is a set.)
\ExecuteMetaData[agda/latex/Section5.tex]{MultiSetStructure}

The module $\module{S}$ defined here packages all the definitions and results we need to work with a relational
structure: $\func{S.structure} : \Type \to \Type$, definitions
$\func{S.equiv} : \func{StrEquiv}\;\func{S.structure}$ and
$\func{S.relation} : \func{StrRel}\;\func{S.structure}$ of structured equivalences and relations respectively,
and proofs that they are univalent. Here, showing that $\func{R}$ is structured requires four conditions:
\ExecuteMetaData[agda/latex/Section5.tex]{areStructured}

The first holds by calculation, while the second holds by definition of $\func{R}$. The third and fourth boil
down to proving that for all $a : A$,
$\tyPath{\func{AL.count}\;a\;(\func{AL.insert}\;x\;ys)}{\func{addIfEq}\;a\;x\;\anum{1}\;(\func{AL.count}\;a\;ys)}$
and
$\tyPath{\func{AL.count}\;a\;(\func{AL.union}\;ys\;ys')}{\func{AL.count}\;a\;ys\;\func{+}\;\func{AL.count}\;a\;ys'}$;
both follow from a case analysis on $ys$ and the outputs of $\func{\_==\_}$. Tupling these four proofs, we
obtain an element of type
$\func{S.relation}\;\func{R}\;(\con{[]}\;\con{,}\;\func{L.insert}\;\con{,}\;\func{L.union}\;\con{,}\;\func{L.count})\;(\con{[]}\;\con{,}\;\func{AL.insert}\;\con{,}\;\func{AL.union}\;\con{,}\;\func{AL.count})$.

It remains only to turn the crank. Applying \cref{thm:qer-descends}, we immediately obtain multiset structures
on $\setQuotient{\func{List}\;A}{\eqRelL{\func{R}}}$ and
$\setQuotient{\func{AssocList}\;A}{\eqRelR{\func{R}}}$ and a path between the two in
$\func{TypeWithStr}\;\func{S.structure}$:

\ExecuteMetaData[agda/latex/Section5.tex]{LQEqualsALQ}

Note that we never need to explicitly show that the multiset operations for $\func{List}\;A$ and
$\func{AssocList}\;A$ preserve $\eqRelL{\func{R}}$ and $\eqRelR{\func{R}}$ respectively; this is a formal
consequence of the proof that $\func{R}$ is structured.

As in \cref{sec:examples}, the path $\func{L/}\eqRelL{\func{R}}\func{≡}\func{AL/}\eqRelR{\func{R}}$ lets us
transfer properties between our quotiented multiset implementations. Consider the axiom that \func{union} is
associative:

\ExecuteMetaData[agda/latex/Section5.tex]{unionAssocAxiom}

This axiom is immediate for $\setQuotient{\func{List}\;A}{\eqRelL{\func{R}}}$ because $\func{L.union}$
(defined as \func{\_++\_}) is associative even prior to quotienting. Then, writing $\func{LUnionAssoc} :
\func{unionAssocAxiom}\; (\setQuotient{\func{List}\;A}{\eqRelL{\func{R}}}\;\con{,}\;\func{LMultisetStr})$,
associativity of \func{union} for $\setQuotient{\func{AssocList}\;A}{\eqRelR{\func{R}}}$ follows by
\func{transport}ing across $\func{L/}\eqRelL{\func{R}}\func{≡}\func{AL/}\eqRelR{\func{R}}$:

\ExecuteMetaData[agda/latex/Section5.tex]{ALUnionAssoc}
\end{example}

We can also see our queue example from \cref{subsec:queues} as an instance of \cref{thm:qer-descends}, as the graph of
$\func{appendReverse} : \func{BatchedQueue}\;A \to \func{ListQueue}\;A$ defines a QER. (The graph of any
function is zigzag-complete.) The quotient of $\func{BatchedQueue}\;A$ winds up being equivalent to our
hand-rolled $\func{BatchedQueueHIT}$, while the quotient of $\func{ListQueue}\;A$ is the original type. Once
we have obtained the raw quotients, we can check that they satisfy any axioms we like; this usually reduces
straightforwardly to showing that the axioms hold on the raw types up to $\eqRelL{R}$ or $\eqRelR{R}$.

\section{Related and future work}
\label{sec:related}

In this paper, we have shown how to combine univalence and higher inductive
types to obtain internal relational representation independence results in
dependent type theory, specifically \CubicalAgda{}. Univalence captures
internally the principle that isomorphic types are equal; the structure identity
principle lifts univalence to structured types and structured isomorphisms;
and set quotients (via higher inductive types) allow us to improve structured
relations to isomorphisms, in order to obtain representation independence
results through the SIP, and validate axioms that a concrete implementation may
otherwise fail to satisfy.

Our work lies at the intersection of formalized mathematics and programming
language theory; we will navigate the related work moving from the former to the
latter.

\paragraph{Proof reuse and transfer}

As we discuss in \cref{sec:intro}, an important problem in modern proof
assistants is the automatic transfer of programs and proofs across related
mathematical structures. In the context of \Coq{},
\citet{MagaudBertot02,Magaud03} considered transferring proofs between
isomorphic types by means of an external plugin, but their work did not handle
dependent types. \Citet{CohenDenesMortberg13} designed a more general framework
for \texttt{CoqEAL}, which supported program and data refinements for arbitrary
relations. Unlike our work, they support \emph{partial} quotients of types,
e.g., admitting \func{ℚ} as a direct refinement of
\tyProduct{\func{ℤ}}{\func{ℤ}}. We believe these are not particularly useful in
practice; indeed, at the time of writing, the \texttt{CoqEAL} library no longer
uses partial quotients. Their work was implemented using a parametricity plugin
and proof search using typeclass instances, but does not handle dependently
typed goals, and therefore could only transport programs and not proofs between
related structures.

\Citet{TabareauTanterSozeau18,TabareauTanterSozeau19} have recently improved
this work by combining parametricity and typeclass based proof search \`a la
\texttt{CoqEAL} with (axiomatic) univalence. This new \Coq{} framework, called
\emph{univalent parametricity}, enables the transport of programs and proofs
between related structures; however, unlike \texttt{CoqEAL}, its scope is limited to
isomorphic types due to its reliance on univalence. We emphasize that while
their work relies on an axiom for univalence, it nevertheless achieves code
reuse in \Coq{} without losing computational content, by means of a careful
setup in which the axiom does not interfere with computationally-relevant parts,
and by using typeclass instances to mimic computation rules similar to those of
Cubical Type Theory. However, their approach does not achieve all the niceties
of Cubical Type Theory, including constructive quotients and functional and
propositional extensionality. \Citet{RingerYazdani+19} automatically build
isomorphisms in \Coq{} between inductive types and transport programs and proofs
across these isomorphisms. This is accomplished using very similar techniques to
those of \citet{TabareauTanterSozeau18}, but written as an external plugin
instead of using typeclass instance search.

Another major contribution of \citet{TabareauTanterSozeau19} is a solution to
the \emph{anticipation problem}, which concerns inferring an interface \emph{a
posteriori} and transporting programs and proofs across this inferred interface.
This is especially useful in dependent type theory, where concrete types enjoy
many definitional equalities that abstract types do not. For example, if one
parameterizes proofs by an implementation of $\mathbb{N}$ with \func{+}, the
abstract \func{+} will not reduce on any input, whereas a concrete
implementation will reduce on \anum{0}. It is therefore very useful to allow users to
develop libraries using concrete types while still enjoying the benefits of
automated program and data refinements. In the setting of representation
independence, however, one usually considers programs that \emph{are}
parameterized by an interface, and that interface determines the notion of
structure-preservation, making the anticipation problem less central. But even a
partial solution to the anticipation problem in our setting would improve the
applicability and ease of use of our framework.

Similar frameworks exist for other proof assistants. The \texttt{Isabelle/HOL}
code generator uses refinements to make abstract code executable
\citep{HaftmannKrauss+13}; a similar refinement framework has been developed in
\Coq{} by \citet{DelawarePitClaudel+15} to synthesize abstract datatypes and
generate \texttt{OCaml} code. There is also an external tool for \texttt{Isabelle/HOL}
called \texttt{Autoref} which uses a parametricity-based translation to achieve
\texttt{CoqEAL}-style refinements \citep{Lammich13}.

\paragraph{The structure identity principle}

In the philosophy of mathematics, \emph{structuralism} is an informal principle
of representation independence: although mathematical objects may be realized in
\emph{ad hoc} ways, reasonable mathematical arguments implicitly access these
structures only through interfaces \citep{Benacerraf65,Awodey13}. As the work on
proof transfer illustrates, formalized mathematics must often rely on this
principle either implicitly or explicitly \citep{CaretteFarmerKohlhase14}.

Researchers in HoTT/UF noticed very early on that univalence captures a form of
structuralism, and have subsequently formalized quite a few versions of the SIP
in HoTT/UF. The first type-theoretic formalization and proof of the SIP for
algebraic structures was given by \citet{CoquandDanielsson13}. As discussed in
\cref{subsec:building}, their notions of structure are given by maps $S : \Type
\to \Type$ with an action on equivalences. That action induces a notion of
structured equivalence, which is then required to agree with paths.

Section 9.8 of the HoTT Book \citep{HoTT13} contains a version of the SIP for
structured categories; \citet{AhrensLumsdaine17} develop a variation of this SIP
using displayed categories. These principles are phrased in terms of structured
\emph{functions} between structured types, splitting the definition of a
structure into two components: what it means to equip a type with structure, and
what it means for a function to preserve that structure. The SIP then applies to
structured functions that are also equivalences. This work is limited to
structures on sets, as opposed to general types, due to the difficulty of
formulating higher categories in HoTT/UF, but \citet{AhrensNorth+20} have
recently developed a higher generalization.

Our SIP is derived from a version proposed by \citet{Escardo19}; like that of
\citet{CoquandDanielsson13}, it is phrased in terms of equivalences rather than
functions, but follows the categorical SIPs by allowing an arbitrary definition
of structured equivalences rather than deriving one from an action on
equivalences. Our \func{UnivalentStr} closely resembles Escard\'{o}'s
\emph{standard notion of structure}, except for being stated in terms of
dependent paths; the two conditions are equivalent.

Our univalent structured relations follow the pattern of the categorical SIPs, with relations substituted for
functions: a notion of structured relations is univalent when structured relations that are equivalences
correspond to paths. However, our suitability conditions requires that even relations that are merely
``equivalence-like'' (i.e., are QERs) have some relationship 
with paths. This
increased power comes with a more limited applicability, seen in the lack of a suitable notion of structured
relation for general function types.

\paragraph{Internalizing relational parametricity}

We have shown that univalence and HITs together recover a rich notion of
representation independence, a concept which in programming languages is often
obtained through parametricity theorems. We note that there are other ``free
theorems'' which cannot be proven in \CubicalAgda{}; a simple example is the
contractibility of $(X : \Type) \to X \to X$.

Just as univalence internalizes the fact that type theory respects isomorphism,
one can consider dependent type theories with \emph{internalized parametricity}
\citep{KrishnaswamiDreyer13, BernardyCoquandMoulin15, NuytsVezzosiDevriese17,
NuytsDevriese18, CavalloHarper20, AbelCockx+20}. Parametricity applies to
representation independence in greater generality than univalence: for example,
it implies that any computation parameterized by a \func{Queue} produces the same
results for $\func{List}\;A$ and $\func{AssocList}\;A$, while we only have such
a result for the quotients described in \cref{sec:relational}. Nevertheless, it
is worth noting that internal parametricity is not \emph{stronger} than
univalence, but rather incomparable: an action on relations does not imply an
action on equivalences.

From the standpoint of formalized mathematics, one drawback of internalized
parametricity is that it is more difficult to combine with classical principles
than univalence. For example, it is incompatible with excluded middle for
h-propositions \citep[Theorem 10]{CavalloHarper20}, unlike univalence
\citep{KapulkinLumsdaine20}. Furthermore, there are not currently any
large-scale proof assistants which implement type theories with internal
parametricity.

\paragraph{Programming in HoTT/UF}

We are not the first to consider applications of univalence and HITs to
programming languages. \Citet{AngiuliMorehouse+16} model \texttt{Darcs}-style
patch theories as HITs; \texttt{HoTTSQL} \citep{ChuWeitz+17} uses a univalent
universe to define the semantics of a query language and prove optimizations
correct; and \citet{BasoldGeuversVanDerWeide17} discuss HITs for types often
used in programming, including modular arithmetic, integers and finite sets.
These examples predate implementations of Cubical Type Theory, and would likely
be significantly easier to formalize in \CubicalAgda{}.

Countless variations of finite (multi)sets have been considered in all the main
proof assistants for HoTT/UF; we limit our comparisons to those closest to our
own. The finite sets of \citet{BasoldGeuversVanDerWeide17} have been further
studied by \citet{FruminGeuvers+18}, and are defined by encoding finite subsets
of $A$ as the free join-semilattice on $A$. If one drops idempotency of the
union operation, one obtains a HIT equivalent to
$\setQuotient{\func{List}\;A}{\eqRelL{\func{R}}}$. They also discuss a variation
called \emph{listed finite sets} which are almost exactly
$\setQuotient{\func{List}\;A}{\eqRelL{\func{R}}}$, except that they have a path
constructor equating lists with duplicate elements. This HIT and a \emph{listed
finite multisets} version have been formalized in \CubicalAgda{} by
\citet{ChoudhuryFiore19}; we have proven their type equivalent to
$\setQuotient{\func{List}\;A}{\eqRelL{\func{R}}}$ as well as to a direct HIT
version of $\setQuotient{\func{AssocList}\;A}{\eqRelR{\func{R}}}$.
\Citet{Gylterud20} defines finite multisets using set quotients of W-types of
sets. As an application, \citet{ForsbergXuGhani20} define in \CubicalAgda{} an
ordinal notation system using a variation of listed finite multisets, which is
then proven equivalent to two other definitions of ordinal notation systems, and
both programs and proofs are transported between the definitions. These
transports are done in an ad-hoc way that can be done more elegantly with the
SIP. Prior to HoTT/UF, \citet{Danielsson12} formalized finite multisets as a
setoid of lists modulo ``bag equivalence,'' which is exactly the relation
\func{R} that we use to quotient lists in \cref{ex:fmset}.

Batched queues maintain the invariant that the first list is only empty if the
queue is empty. One can encode this invariant in the type of the
$\con{Q⟨\_,\_⟩}$ constructor, so that \var{enqueue} and \var{dequeue} are
required to preserve the invariant. This variation of queues is equivalent to
\func{BatchedQueueHIT} and has been formalized independently in \CubicalAgda{}
by \citet{GjorupVindum19}.

\paragraph{Abstract types}

There are of course countless papers on abstract types and representation
independence in programming languages, but we limit our discussion to the few
most relevant. Most representation independence theorems apply to structured
heterogeneous relations \citep{Mitchell86}, much like the ones we consider. In
the context of System F, \citet{Robinson94} uses invariance under isomorphism to
derive a simple form of representation independence for abstract types.

Extended ML \citep{KahrsSannellaTarlecki97} is an extension of Standard ML whose
signatures can include axioms drawn from boolean expressions extended with
quantifiers and termination predicates; the technique of \emph{algebraic
specification} \citep{SannellaTarlecki87} is a method for establishing such axioms by successive
structure-preserving refinements, much like \texttt{CoqEAL}.
Miranda's \emph{data types with laws} extend datatype constructors with a
limited ability to maintain invariants \citep{Turner85}; as with the normal-form
representations of quotients discussed in \cref{eg:q}, these, unlike HITs, have
the drawback of incurring runtime cost to maintain normal forms.

\paragraph{Future work}

A natural next direction is to formalize representation independence for more
sophisticated data structures, such as self-balancing binary search trees. Our
methodology already applies to these and other examples, but proving the
correspondences suitable would be more complex. We can also extend the structure
tactics described in \cref{sec:sip,sec:relational}, particularly to support
inductive types beyond \func{Maybe}. Note, however, that these tactics apply to
the types that appear in \emph{interfaces}---not in representations---which are
usually quite limited.

Finally, we would like to investigate whether our techniques are practical in
\Coq{} using axiomatic univalence and HITs. Although we rely heavily on these
features computing automatically in \CubicalAgda{},
\citet{TabareauTanterSozeau19} overcome similar difficulties with the univalence
axiom by cleverly using typeclass instances to mimic computation rules.


\begin{acks}
  We thank Robert Harper and Jonathan Sterling for helpful conversations about
  this work.

  The material in this paper is based upon research supported by the
  \grantsponsor{1}{Air Force Office of Scientific
  Research}{https://www.wpafb.af.mil/afrl/afosr/} under MURI grants
  \grantnum{1}{FA9550-15-1-0053} and \grantnum{1}{FA9550-19-1-0216}, and the
  \grantsponsor{2}{Swedish Research Council
  (Vetenskapsr\r{a}det)}{https://www.vr.se/} under \grantnum{2}{Grant
  No.~2019-04545}.
  The views and conclusions contained in this document are those of the authors
  and should not be interpreted as representing the official policies, either
  expressed or implied, of any sponsoring institution, the U.S. government, or
  any other entity.
\end{acks}

\bibliography{refs}

%

\end{document}